\newtheorem{thm}{Theorem}
\newtheorem{lem}{Lemma}
\newtheorem{defn}{Definition}
\newcommand{\Rmnum}[1]{\expandafter\@slowromancap\romannumeral #1@}
\newcommand{\abs}[1]{\left\vert#1\right\vert}
\newcommand{\norm}[1]{\left\Vert#1\right\Vert}
\newcommand{\pdf}{\textit{pdf} }
\newcommand{\CDF}{\textit{CDF} }
\newcommand{\Ball}{\textnormal{Ball} }
\newcommand{\spann}{\textnormal{span} }
\newcommand{\mbf}{\mathbf}
\begin{document}

\title{The Random Coding Bound Is Tight for the Average Linear Code or Lattice}

%\author{
%\IEEEauthorblockN{Yuval Domb, Ram Zamir, Meir Feder} \\
%\IEEEauthorblockA{
%School of Electrical Engineering,\\
%Tel-Aviv University \\
%Tel-Aviv 69978, Israel\\
%Email: \{yuvald, zamir, meir\}@eng.tau.ac.il}
%}

\author{
Yuval~Domb,~\IEEEmembership{Student Member,~IEEE,}
Ram~Zamir,~\IEEEmembership{Fellow,~IEEE,}
Meir~Feder,~\IEEEmembership{Fellow,~IEEE}
\thanks{A subset of this work was presented at the IEEE Convention Israel (IEEEI) 2012.}
}

%\author{
%\IEEEauthorblockN{Yuval Domb}
%\IEEEauthorblockA{
%School of Electrical Engineering,\\
%Tel-Aviv University,
%Tel-Aviv 69978, Israel\\
%Email: yuvald@eng.tau.ac.il}
%\and
%\IEEEauthorblockN{Ram Zamir}
%\IEEEauthorblockA{
%School of Electrical Engineering,\\
%Tel-Aviv University,
%Tel-Aviv 69978, Israel\\
%Email: zamir@eng.tau.ac.il}
%\and
%\IEEEauthorblockN{Meir Feder}
%\IEEEauthorblockA{
%School of Electrical Engineering,\\
%Tel-Aviv University,
%Tel-Aviv 69978, Israel\\
%Email: meir@eng.tau.ac.il}
%}

\maketitle

% -----------------------------------------------------------
% -----------------------------------------------------------
\begin{abstract}
In 1973, Gallager proved that the random-coding bound is exponentially tight for the random code ensemble at all rates, even below expurgation.
This result explained that the random-coding exponent does not achieve the expurgation exponent due to the properties of the random ensemble, irrespective of the utilized bounding technique.
It has been conjectured that this same behavior holds true for a random ensemble of {\em linear} codes.
This conjecture is proved in this paper.
Additionally, it is shown that this property extends to Poltyrev's random-coding exponent for a random ensemble of lattices.
\end{abstract}

\IEEEpeerreviewmaketitle

% -----------------------------------------------------------
% -----------------------------------------------------------
\section{Introduction}
The error exponent, for a particular channel, is a function describing the exponential decay rate (with increasing block length) of the maximum-likelihood decoding error probability, for any communication rate $R$ below the capacity $C$.
The random-coding exponent is constructed \cite{b:GallagerInformationTheory} by upper bounding the {\em average} of a maximum-likelihood decoder's error probability over a random ensemble of codes, and considering its exponential decay rate.
In general, the randon-coding exponent has two distinct regions, separated by the critical rate $R_{\textnormal{cr}}$:
\begin{enumerate}
\item The straight line region: $0<R<R_{\textnormal{cr}}$
\item The sphere packing region: $R_{\textnormal{cr}}\leq R<C$
\end{enumerate}
The random-coding exponent is tight in the second region.
This is easily shown via its equality to the sphere packing exponent, which is the exponential decay rate of a lower bound on the error probability in that region \cite{b:GallagerInformationTheory}.
In the first region, there exists an expurgation rate $0<R_{\textnormal{ex}}<R_{\textnormal{cr}}$, such that through expurgation of ``bad" codewords, an exponent better than the random-coding exponent is achievable for any rate $0<R< R_{\textnormal{ex}}$ \cite{b:GallagerInformationTheory}.
Naturally, this gives rise to the question, ``Why is the random-coding exponent not tight in the region $0<R<R_{\textnormal{cr}}$? Is it due to the poor performance of the random ensemble at low rates, or perhaps is it due to the upper bounding technique used for its construction?".
The question is answered in Gallager's 1973 paper \cite{j:GallagerTheRandom}, where a lower bound on the average error probability of the random ensemble is shown, whose exponential rate coincides with the random-coding exponent at all rates.
Evidently, the random-coding exponent at low rates is not tight due to the poor performance of the random ensemble rather than a poor bounding technique.

The random-coding exponent, shown for random codes, applies for random {\em linear} codes as well.\footnotemark{}
\footnotetext{Formally, the random linear ensemble can only achieve the random-coding exponent of channels whose exponent is maximized by a uniformly distributed codeword distribution. This is discussed in greater detail in Section \Rmnum{4}.}
This comes from the fact that the derivation of the error probability upper bound requires only pairwise independence between codewords, a property shared by both ensembles.
However, extending Gallager's lower bound at low rates, for the random linear ensemble, is left a challenge, since it requires triple-wise independence between codewords, a property unmet by the random linear ensemble.
Nonetheless, it has been conjectured that the random-coding exponent is tight for the random linear ensemble, at low rates \cite{j:BargRandomCodesMinimum,j:BargOnTheAsymptotic}.
This paper rigorously proves this conjecture using a new lower bound exponent.
Construction of the new exponent is accomplished by determining the exact distribution of codewords conditioned on other codewords for the random linear ensemble, and utilizing de-Caen's lower bound on the probability of a union of events \cite{j:deCaenALowerBound}\footnotemark{}.
\footnotetext{In \cite{j:CohenLowerBoundsOn}, de-Caen's inequality was used to lower bound the error probability of a specific linear code based on its weight enumeration.}

\begin{figure}[h]
\center{\includegraphics[width=0.5\textwidth]{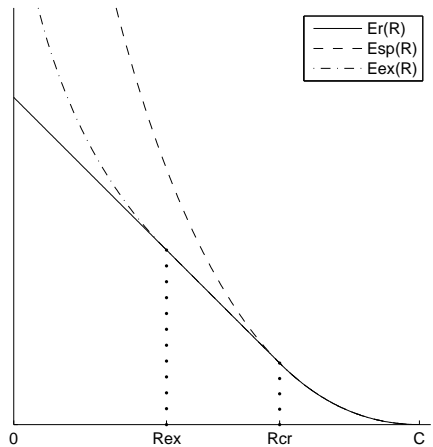}}
\caption{\label{fig_errexp} The random-coding error exponent $E_{\textnormal{r}}(R)$, along with the sphere packing $E_{\textnormal{sp}}(R)$, and expurgation $E_{\textnormal{ex}}(R)$ exponents.}
\end{figure}

Poltyrev's random-coding exponent for unbounded lattices is an exponential upper bound on the average error probability of maximum-likelihood decoding of a lattice point transmitted over an Additive White Gaussian Noise (AWGN) channel, where the average is taken over a \textit{uniformly distributed}\footnotemark{} set of lattices.
\footnotetext{The notion of uniformity for the lattice ensemble is clarified in the paper body.}
To show exponential tightness, our lower bound exponent is extended to the lattice case, by constructing a random lattice ensemble from the random linear ensemble.
This, as turns out, is more complicated than extending the achievability bound \cite{j:LoeligerAveragingBounds} to the lattice case, requiring stringent conditions on the order of the limits in the construction.

The paper is organized as follows:
Section \Rmnum{2} revisits Gallager's proof for the random ensemble.
Section \Rmnum{3} defines the linear random ensemble and establishes its codeword distribution, conditioned on the other codewords.
Section \Rmnum{4} uses the results of Section \Rmnum{3} to construct the new lower bound leading to Theorem \ref{t:thm_lin_tight} which states the tightness of the random-coding exponent for the random linear ensemble.
Section \Rmnum{5} continues with the lattice extension, ending with Theorem \ref{t:thm_lat_tight} which states the tightness of the lattice random-coding exponent for the random lattice ensemble.
Finally, section \Rmnum{6} ends the paper with some concluding remarks.

% -----------------------------------------------------------
% -----------------------------------------------------------
\section{Gallager's random ensemble revisited}
Let us begin by revisiting Gallager's proof for the random code ensemble using the same terminology and mostly following the same steps as the original.
The differences in the case of the random linear ensemble are highlighted in the next section.
The reader is assumed to be familiar with \cite{b:GallagerInformationTheory,j:GallagerTheRandom}.

The random ensemble is a collection of all codes, such that each code consists of $M=q^K$ codewords, where each codeword is a $q$-ary $N$-tuple, drawn independently from the others, and $K<N$.
The ensemble's average error probability upon transmission of a codeword, corresponding to the $m$'th message where $m\in[0,M-1]$, can be expressed as
\begin{equation}
\label{e:pe_ensemble}
\overline{P}_{e,m}
= \sum_{\mbf{x}_m} \sum_\mbf{y} Q_N(\mbf{x}_m) P_N(\mbf{y}|\mbf{x}_m) \Pr(error|m,\mbf{x}_m,\mbf{y})
\end{equation}
where $\mbf{x}_m$ is the transmitted codeword (corresponding to the $m$'th message), $\mbf{y}$ is the corresponding channel observation, $Q_N(\mbf{x}_m)$ is the a-priori probability for transmitting $\mbf{x}_m$, $P_N(\mbf{y}|\mbf{x}_m)$ is the channel transfer probability, and $\Pr(error|m,\mbf{x}_m,\mbf{y})$ is the probability that upon transmission of $\mbf{x}_m$ and reception of $\mbf{y}$, the decoder decodes $\mbf{x}_{m'}$ with $m'\neq m$, (the probability stems from averaging over the ensemble, since $\mbf{x}_{m'}$ is drawn independently from $\mbf{x}_m$). Maximum-likelihood decoding is implicity assumed throughout the paper.

Suppose $\mbf{x}$ is transmitted and $\mbf{y}$ received, and let $A(\mbf{x},\mbf{y})$ be the set of all possible channel inputs $\mbf{x}'$ that are more likely to be decoded than $\mbf{x}$, thus,
\begin{equation}
\label{e:single_error_event_noind}
A(\mbf{x},\mbf{y})
= \left\{\mbf{x}': \frac{P_N(\mbf{y}|\mbf{x}')}{P_N(\mbf{y}|\mbf{x})} \geq 1 \right\},
\end{equation}
Define $\sigma(\mbf{x}',\mbf{x},\mbf{y})$ to be the characteristic function of $A(\mbf{x},\mbf{y})$, i.e.
\begin{align}
\label{e:single_error_event_noind_char}
\sigma(\mbf{x}',\mbf{x},\mbf{y}) =
\left\{
\begin{array}{ll}
1 & : \mbf{x}' \in A(\mbf{x},\mbf{y}) \\
0 & : \mbf{x}' \notin A(\mbf{x},\mbf{y}).
\end{array}
\right.
\end{align}
Let $A_{m'}(\mbf{x}_m,\mbf{y})$ be the event that for a specific index $m'\neq m$, $\mbf{x}_{m'}$ is such as to cause an error for $\mbf{x}_m$ transmitted and $\mbf{y}$ received; thus\footnotemark{}
\footnotetext{Gallager defined $A_{m'}(\mbf{x}_m,\mbf{y})$ equivalently as $$A_{m'}(\mbf{x}_m,\mbf{y}) = \left\{\mbf{x}_{m'}: \frac{P_N(\mbf{y}|\mbf{x}_{m'})}{P_N(\mbf{y}|\mbf{x}_m)} \geq 1 \right\}$$}
\begin{equation}
\label{e:single_error_event}
A_{m'}(\mbf{x}_m,\mbf{y})
= \{\mbf{x}_{m'}\in A(\mbf{x}=\mbf{x}_m,\mbf{y}) \}.
\end{equation}
The error probability $\Pr(error|m,\mbf{x}_m,\mbf{y})$ can now be expressed as the probability (w.r.t $Q_N(\mbf{x}_{m'})$) of the union of all error events \eqref{e:single_error_event},
\begin{equation}
\label{e:pr_error_m_x_y}
\Pr(error|m,\mbf{x}_m,\mbf{y})
= \Pr \left( \bigcup_{m'\neq m} A_{m'}(\mbf{x}_m,\mbf{y}) \right).
\end{equation}

The expression \eqref{e:pr_error_m_x_y} can be bounded as follows:
Upper bound using the union bound,
\begin{equation}
\label{e:pr_error_m_x_y_up}
\Pr(error|m,\mbf{x}_m,\mbf{y})
\leq \sum_{m'\neq m} \Pr(A_{m'}(\mbf{x}_m,\mbf{y})),
\end{equation}
and lower bound using the Bonferroni inequality \cite{j:deCaenALowerBound},
\begin{align}
\label{e:pr_error_m_x_y_down}
\Pr&(error|m,\mbf{x}_m,\mbf{y})
\geq \sum_{m'\neq m} \Pr(A_{m'}(\mbf{x}_m,\mbf{y}))\nonumber \\
&- 0.5 \cdot \sum_{\substack{m'\neq m\\m''\neq m,m'}} \Pr(A_{m'}(\mbf{x}_m,\mbf{y}) \cap A_{m''}(\mbf{x}_m,\mbf{y})).
\end{align}
In order to highlight the differences from the random linear ensemble, let us proceed in a slightly more detailed fashion than perhaps required.
Using \eqref{e:single_error_event_noind_char}, $\Pr(A_{m'}(\mbf{x}_m,\mbf{y}))$ can be expressed as
\begin{equation}
\label{e:pr_single_error_event}
\Pr(A_{m'}(\mbf{x}_m,\mbf{y})) = \sum_{\mbf{x}_{m'}} Q_N(\mbf{x}_{m'}|\mbf{x}_m) \sigma(\mbf{x}_{m'},\mbf{x}_m,\mbf{y})
\end{equation}
where $Q_N(\mbf{x}_{m'}|\mbf{x}_m)$ is the probability of the codeword $\mbf{x}_{m'}$ conditioned on $\mbf{x}_m$.
$\Pr(A_{m'}(\mbf{x}_m,\mbf{y}) \cap A_{m''}(\mbf{x}_m,\mbf{y}))$ can be similarly expressed as
\begin{align}
\label{e:pr_double_error_event}
\Pr&(A_{m'}(\mbf{x}_m,\mbf{y}) \cap A_{m''}(\mbf{x}_m,\mbf{y})) \nonumber \\
&= \sum_{\mbf{x}_{m'},\mbf{x}_{m''}}Q_N(\mbf{x}_{m'},\mbf{x}_{m''}|\mbf{x}_m) \sigma(\mbf{x}_{m'},\mbf{x}_m,\mbf{y}) \sigma(\mbf{x}_{m''},\mbf{x}_m,\mbf{y})
\end{align}
where $Q_N(\mbf{x}_{m'},\mbf{x}_{m''}|\mbf{x}_m)$ is the probability of the codewords $\mbf{x}_{m''}$ and $\mbf{x}_{m'}$ conditioned on $\mbf{x}_m$.
Since the random ensemble's codewords are uniformly and independently distributed
\begin{align}
\label{e:pr_rnd_1_cond_1&2}
&Q_N(\mbf{x}_{m'}|\mbf{x}_m) = Q_N(\mbf{x}_{m'}) = q^{-N} \nonumber \\
&Q_N(\mbf{x}_{m'},\mbf{x}_{m''}|\mbf{x}_m) = Q_N(\mbf{x}_{m'})Q_N(\mbf{x}_{m''}) = q^{-2N}.
\end{align}
Temporarily suppressing the dependence on $\mbf{x}_m$ and $\mbf{y}$, define
\begin{equation}
\label{e:def_rnd_alpha}
\alpha \triangleq \Pr(A_{m'}(\mbf{x}_m,\mbf{y})).
\end{equation}
Plugging \eqref{e:pr_single_error_event}, \eqref{e:pr_double_error_event}, and \eqref{e:pr_rnd_1_cond_1&2} into \eqref{e:pr_error_m_x_y_up} and \eqref{e:pr_error_m_x_y_down}, while using the definition from \eqref{e:def_rnd_alpha}, yields the following bounds on $\Pr(error|m,\mbf{x}_m,\mbf{y})$, for the random ensemble:
\begin{equation}
\label{e:pr_rnd_error_m_x_y_0}
(M-1)\alpha - [(M-1)\alpha]^2 \leq \Pr(error|m,\mbf{x}_m,\mbf{y}) \leq (M-1)\alpha.
\end{equation}
The second term in the left-hand-side of \eqref{e:pr_rnd_error_m_x_y_0} can be replaced with $[(M-1)\alpha]^{\rho}$ for $1\leq\rho\leq2$ by noting that when $(M-1)\alpha\geq1$, $(M-1)\alpha - [(M-1)\alpha]^{\rho}\leq0$, and when $(M-1)\alpha<1$ the lower bound is only weakened, thus
\begin{equation}
\label{e:pr_rnd_error_m_x_y}
(M-1)\alpha - [(M-1)\alpha]^{\rho} \leq \Pr(error|m,\mbf{x}_m,\mbf{y}) \leq (M-1)\alpha
\end{equation}
where $1\leq\rho\leq2$.
Gallager uses \eqref{e:pr_rnd_error_m_x_y} to show that the upper and lower bounds are asymptotically exponentially equal.
The purpose of this paper is to show that a bounding similar to \eqref{e:pr_rnd_error_m_x_y} is applicable for the random linear ensemble.\footnotemark{}
\footnotetext{In \cite{j:BargOnTheAsymptotic} Barg points out that plugging the term $[(M-1)\alpha]^\rho$ for $\Pr(error|m,\mbf{x}_m,\mbf{y})$ in \eqref{e:pe_ensemble} and maximizing over $1\leq\rho\leq2$ for $R<R_{\textnormal{cr}}$ results in the error exponent for list decoding with a list of size 2.}

% -----------------------------------------------------------
% -----------------------------------------------------------
\section{The random linear ensemble}
The conditional codeword distribution for the random linear ensemble is unfortunately not as simple as \eqref{e:pr_rnd_1_cond_1&2}.
This section defines the random linear ensemble and calculates its conditional codeword distribution.

% -----------------------------------------------------------
\subsection{Definition}
Denote by $\mathbb{F}_q$ the finite field with $q$-ary elements and denote by $\mathbb{F}_q^n$ its $n$-dimensional vector extension.
All vector operations are performed with the usual element-wise scalar mod-$q$.
It is implicitly assumed that all operations hereafter are with regards to this finite vector field.

Define a linear block code $\mathcal{C}$ of rate $R=\frac{K}{N}$ as a $K$-dimensional translated linear subspace of $\mathbb{F}_q^N$.
More specifically, $\mathcal{C}$ is defined as the collection of all codewords resulting from the linear mapping
\begin{equation}
\label{e:def_lin_code}
\mathcal{C} = \{f: \mathbb{F}_q^K \mapsto \mathbb{F}_q^N, f(\mbf{u})=\mbf{u}G+\mbf{v}\}
\end{equation}
where $G$ and $\mbf{v}$ are some $K\times N$ matrix and $1\times N$ vector of $q$-ary coefficients, respectively.
To simplify the notations, all vectors are assumed to be row vectors, throughout the paper.

Define the ensemble of all linear block codes of rate $R=\frac{K}{N}$ as the collection of all codes \eqref{e:def_lin_code} with the elements of $G$ and $\mbf{v}$ uniformly distributed and independent from each other.
Unless otherwise indicated, all future distributions are calculated over the random linear ensemble, which is conveniently referred to as the linear ensemble.

For consistency with Gallager's terminology, a codeword $\mbf{x}_m$ of index $m$ is implicity assumed to result from the transformation $\mbf{x}_m=\mbf{u}_mG+\mbf{v}$ where $\mbf{u}_m$ is the $q$-ary representation of the index $m$.

% -----------------------------------------------------------
\subsection{The pairwise conditional codeword distribution}
This section presents a simple analysis of the codeword distribution conditioned on a single other codeword, as a preview to the next section which provides a theorem outlining the general case.

Due to linearity, $Q_N(\mbf{x}_{m'}|\mbf{x}_{m_1})$ can be manipulated as follows:
\begin{align}
\label{e:pr_lin_1_cond_1_1}
Q_N&(\mbf{x}_{m'}|\mbf{x}_{m_1}) \nonumber \\
&= \Pr(\mbf{x}_{m'}=\mbf{u}_{m'}G+\mbf{v}|\mbf{x}_{m_1}=\mbf{u}_{m_1}G+\mbf{v}) \nonumber \\
&= \Pr(\mbf{x}_{m'}-\mbf{x}_{m_1}=(\mbf{u}_{m'}-\mbf{u}_{m_1})G|\mbf{v}=\mbf{x}_{m_1}-\mbf{u}_{m_1}G) \nonumber \\
&= \Pr(\hat{\mbf{x}}=\hat{\mbf{u}}G)
\end{align}
where $\hat{\mbf{x}}\triangleq \mbf{x}_{m'}-\mbf{x}_{m_1}$ and $\hat{\mbf{u}}\triangleq \mbf{u}_{m'}-\mbf{u}_{m_1}$ (note that since $\mbf{u}_{m'}\neq \mbf{u}_{m_1}$, $\hat{\mbf{u}}\in\{1,2,\ldots,q^K-1\}$, i.e. $\hat{\mbf{u}}$ is never zero). The last equality is due to $G$ being independent of $\mbf{v}$.
Finally, for any legal value of $\hat{\mbf{u}}$, $\Pr(\hat{\mbf{x}}=\hat{\mbf{u}}G)$ is uniformly distributed due to the randomness of $G$, thus
\begin{equation}
\label{e:pr_lin_1_cond_1_2}
Q_N(\mbf{x}_{m'}|\mbf{x}_{m_1}) = q^{-N}.
\end{equation}
Equation \eqref{e:pr_lin_1_cond_1_2} is commonly referred to as the pairwise independence of the linear ensemble.
This property is key in upper bounding the linear ensemble's error probability, using the random-coding bound \cite{b:GallagerInformationTheory}.

% -----------------------------------------------------------
\subsection{The generalized conditional codeword distribution}
Begin by defining a new operator to simplify the presentation.
\begin{defn}
Let $\textnormal{span}^*(\mbf{x}_1,\mbf{x}_2,\ldots,\mbf{x}_n)$ define the following translated $(n-1)$-dimensional linear subspace,
\begin{equation}
\textnormal{span}^*(\mbf{x}_1,\mbf{x}_2,\ldots,\mbf{x}_n) \triangleq \mbf{x}_1+\textnormal{span}(\mbf{x}_2-\mbf{x}_1,\ldots,\mbf{x}_n-\mbf{x}_1).
\end{equation}
Note that the role of $\mbf{x}_1$ in the definition above can be replaced by any of the other arguments $\mbf{x}_2,\ldots,\mbf{x}_n$.
$\textnormal{span}^*(\mbf{x}_1,\mbf{x}_2,\ldots,\mbf{x}_n)$ is indifferent to and equivalent for any choice of $\mbf{x}_i$.
\end{defn}

The following theorem outlines the conditional codeword distribution in the general case.
\begin{thm}
\label{t:thm_pr_lin_1_cond_k}
Let $\mbf{x}_{m'},\mbf{x}_{m_1},\ldots,\mbf{x}_{m_k}$ be codewords from the linear ensemble where $k\geq 2$, and let $S_\mbf{u} = \{\mbf{u}: \mbf{u}\in \textnormal{span}^*\{\mbf{u}_{m_1},\ldots,\mbf{u}_{m_k}\}\}$, then the ensemble average of the distribution of $\mbf{x}_{m'}$ conditioned on $\mbf{x}_{m_1},\ldots,\mbf{x}_{m_k}$, is given by the following expressions: \\
When $\mbf{u}_{m'} \notin S_\mbf{u}$,
\begin{equation}
\label{e:thm_pr_lin_1_cond_k_u_notinSu}
Q_N(\mbf{x}_{m'}|\mbf{x}_{m_1},\ldots,\mbf{x}_{m_k}) = q^{-N}.
\end{equation}
When $\mbf{u}_{m'} \in S_\mbf{u}$,
\begin{equation}
\label{e:thm_pr_lin_1_cond_k_u_inSu}
Q_N(\mbf{x}_{m'}|\mbf{x}_{m_1},\ldots,\mbf{x}_{m_k})
= \left\{
\begin{array}{ll}
1 & : \mbf{x}_{m'} = \mathcal{L}(\mbf{x}_{m_1},\ldots,\mbf{x}_{m_k}) \\
0 & : \mbf{x}_{m'} \neq \mathcal{L}(\mbf{x}_{m_1},\ldots,\mbf{x}_{m_k})
\end{array}
\right.
\end{equation}
where $\mathcal{L}$ is any linear transformation such that
\begin{equation}
\mbf{u}_{m'} = \mathcal{L}(\mbf{u}_{m_1},\ldots,\mbf{u}_{m_k}).
\end{equation}
\end{thm}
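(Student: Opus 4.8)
The plan is to exploit the affine structure of the encoding map $\mbf{u}\mapsto\mbf{u}G+\mbf{v}$, generalizing the difference manipulation used in \eqref{e:pr_lin_1_cond_1_1}. First I would pass to differences relative to one conditioning codeword, say $\mbf{x}_{m_1}$, setting $\hat{\mbf{u}}_i\triangleq\mbf{u}_{m_i}-\mbf{u}_{m_1}$ and $\hat{\mbf{x}}_i\triangleq\mbf{x}_{m_i}-\mbf{x}_{m_1}=\hat{\mbf{u}}_i G$ for $i=2,\dots,k$, and $\hat{\mbf{u}}'\triangleq\mbf{u}_{m'}-\mbf{u}_{m_1}$, $\hat{\mbf{x}}'\triangleq\mbf{x}_{m'}-\mbf{x}_{m_1}=\hat{\mbf{u}}'G$; in every difference the offset $\mbf{v}$ cancels. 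The key preliminary observation is that for each fixed $G$ the codeword $\mbf{x}_{m_1}=\mbf{u}_{m_1}G+\mbf{v}$ is merely a uniform shift of the independent uniform $\mbf{v}$, so $\mbf{x}_{m_1}$ is independent of $G$ and hence of all of $\hat{\mbf{x}}_2,\dots,\hat{\mbf{x}}_k,\hat{\mbf{x}}'$. Conditioning on $(\mbf{x}_{m_1},\dots,\mbf{x}_{m_k})$ therefore reduces, as far as the law of $\mbf{x}_{m'}=\mbf{x}_{m_1}+\hat{\mbf{x}}'$ is concerned, to conditioning on the $G$-measurable event $\{\hat{\mbf{u}}_i G=\hat{\mbf{x}}_i,\ i=2,\dots,k\}$ alone.

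For $\mbf{u}_{m'}\in S_\mbf{u}$, membership in $\textnormal{span}^*\{\mbf{u}_{m_1},\dots,\mbf{u}_{m_k}\}$ means $\mbf{u}_{m'}=\sum_{i=1}^{k}a_i\mbf{u}_{m_i}$ for some coefficients with $\sum_{i}a_i=1$. Applying the affine map and using $\sum_i a_i=1$ to reassemble the offset gives $\mbf{x}_{m'}=\sum_i a_i(\mbf{u}_{m_i}G+\mbf{v})=\sum_i a_i\mbf{x}_{m_i}$, a deterministic function of the conditioning codewords, which is exactly \eqref{e:thm_pr_lin_1_cond_k_u_inSu}. I would then settle the ``any $\mathcal{L}$'' assertion by showing the value is independent of the chosen affine representation: if $(a_i)$ and $(b_i)$ both realize $\mbf{u}_{m'}$, then $(a_i-b_i)$ is a zero-sum null combination of the $\mbf{u}_{m_i}$, and applying it to the codewords gives $\sum_i(a_i-b_i)\mbf{x}_{m_i}=\big(\sum_i(a_i-b_i)\mbf{u}_{m_i}\big)G+\big(\sum_i(a_i-b_i)\big)\mbf{v}=\mbf{0}$, so both representations produce the same $\mbf{x}_{m'}$.

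For $\mbf{u}_{m'}\notin S_\mbf{u}$, the equivalent statement is $\hat{\mbf{u}}'\notin\textnormal{span}(\hat{\mbf{u}}_2,\dots,\hat{\mbf{u}}_k)$, i.e.\ $\hat{\mbf{u}}'$ lies outside the span of the remaining difference rows. I would decompose $G$ into its $N$ mutually independent, uniform columns $\mbf{g}^{(1)},\dots,\mbf{g}^{(N)}$, so the $j$-th coordinate of each observation is $\hat{\mbf{u}}_i\cdot\mbf{g}^{(j)}$ and the unknown coordinate is $\hat{\mbf{u}}'\cdot\mbf{g}^{(j)}$. The heart of the argument is then a per-column linear-algebra lemma: if $\mbf{g}$ is uniform on $\mathbb{F}_q^K$ and $\hat{\mbf{u}}'$ is linearly independent of a basis $\hat{\mbf{u}}_{i_1},\dots,\hat{\mbf{u}}_{i_r}$ of $\textnormal{span}(\hat{\mbf{u}}_2,\dots,\hat{\mbf{u}}_k)$, then the map $\mbf{g}\mapsto(\hat{\mbf{u}}_{i_1}\cdot\mbf{g},\dots,\hat{\mbf{u}}_{i_r}\cdot\mbf{g},\hat{\mbf{u}}'\cdot\mbf{g})$ is onto $\mathbb{F}_q^{r+1}$, so these $r+1$ inner products are jointly uniform and in particular $\hat{\mbf{u}}'\cdot\mbf{g}$ is uniform on $\mathbb{F}_q$ given the others. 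Column independence then upgrades this to: $\hat{\mbf{x}}'=\hat{\mbf{u}}'G$ is uniform on $\mathbb{F}_q^N$ conditioned on the $\hat{\mbf{x}}_i$, whence $\mbf{x}_{m'}$ is uniform and $Q_N=q^{-N}$, which is \eqref{e:thm_pr_lin_1_cond_k_u_notinSu}.

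The main obstacle I anticipate is the careful bookkeeping of the conditioning in the second case rather than any single hard inequality: one must argue cleanly that the offset $\mbf{v}$ injects no information about $G$ (so that the conditioning collapses to a $G$-measurable event), and then combine the surjectivity lemma with the column-wise factorization of $G$ to obtain joint rather than merely marginal uniformity. The well-definedness of $\mathcal{L}$ in the first case, while elementary, is the other point that must be stated explicitly so that \eqref{e:thm_pr_lin_1_cond_k_u_inSu} is unambiguous.
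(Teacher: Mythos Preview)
Your proposal is correct and follows the same overall skeleton as the paper: pass to differences to eliminate $\mbf{v}$, reduce to a maximal linearly independent subset of the difference messages, and then split into the two cases according to whether $\hat{\mbf{u}}'$ lies in that span. The execution of the ``not in $S_\mbf{u}$'' case is where you diverge: the paper performs a \emph{row-side} change of basis, introducing an invertible $K\times K$ matrix $A$ with $G=A\widetilde{G}$ so that the chosen basis vectors become standard unit vectors and the conditioning amounts to fixing the top $t$ rows of $\widetilde{G}$ while the bottom $K-t$ rows remain i.i.d.\ uniform; uniformity of $\hat{\mbf{x}}'$ then follows because $\widetilde{\mbf{u}}$ has a nonzero coordinate hitting $\widetilde{G}'$. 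You instead argue \emph{column-wise}, exploiting the mutual independence of the columns of $G$ and the surjectivity of the full-rank linear form $\mbf{g}\mapsto(\hat{\mbf{u}}_{i_1}\!\cdot\!\mbf{g},\dots,\hat{\mbf{u}}_{i_r}\!\cdot\!\mbf{g},\hat{\mbf{u}}'\!\cdot\!\mbf{g})$ to get per-coordinate conditional uniformity, then multiply across columns. Both routes are standard linear-algebra dualities and equally rigorous; your column approach is slightly more self-contained (no auxiliary matrix $A$), while the paper's basis change makes the structure of the conditional generator matrix explicit. For the ``in $S_\mbf{u}$'' case your direct affine-combination argument with $\sum_i a_i=1$ is in fact cleaner than the paper's, and your explicit verification that any two affine representations of $\mbf{u}_{m'}$ yield the same $\mbf{x}_{m'}$ fills a point the paper leaves implicit.
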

\begin{proof}
Define $\hat{\mbf{u}}\triangleq \mbf{u}_{m'}-\mbf{u}_{m_k}$, $\hat{\mbf{u}}_i\triangleq \mbf{u}_{m_i}-\mbf{u}_{m_k},\forall i=1,\ldots,k-1$, and $\hat{S}_\mbf{u} = \{\mbf{u}: \mbf{u}\in \textnormal{span}\{\hat{\mbf{u}}_1,\ldots,\hat{\mbf{u}}_{k-1}\}\}$ (note that since $\mbf{u}_{m'},\mbf{u}_{m_1},\ldots,\mbf{u}_{m_k}$ are unique, $\hat{\mbf{u}},\hat{\mbf{u}}_1,\ldots,\hat{\mbf{u}}_{k-1}$ are nonzero and unique, thus $\hat{\mbf{u}}\in\{\{1,2,\ldots,q^K-1\}\setminus\{\hat{\mbf{u}}_1,\ldots,\hat{\mbf{u}}_{k-1}\}\}$).
Define $\hat{\mbf{x}}\triangleq \mbf{x}_{m'}-\mbf{x}_{m_k}$ and $\hat{\mbf{x}}_i\triangleq \mbf{x}_{m_i}-\mbf{x}_{m_k},\forall i=1,\ldots,k-1$.
Using these definitions, $Q_N(\mbf{x}_{m'}|\mbf{x}_{m_1},\ldots,\mbf{x}_{m_k})$ can be manipulated in a similar manner to \eqref{e:pr_lin_1_cond_1_1},
\begin{align}
\label{e:pr_lin_1_cond_k_0}
Q_N&(\mbf{x}_{m'}|\mbf{x}_{m_1},\ldots,\mbf{x}_{m_k}) \nonumber \\
&= \Pr(\hat{\mbf{x}}=\hat{\mbf{u}}G|\hat{\mbf{x}}_1=\hat{\mbf{u}}_1G,\ldots,\hat{\mbf{x}}_{k-1}=\hat{\mbf{u}}_{k-1}G).
\end{align}
Define $\{j_i\}_{i=1}^t$ to be a set of indexes such that the vectors in the set $\{\hat{\mbf{u}}_{j_1},\ldots,\hat{\mbf{u}}_{j_t}\}$ are all linearly independent and $\textnormal{span}(\hat{\mbf{u}}_{j_1},\ldots,\hat{\mbf{u}}_{j_t})=\hat{S}_\mbf{u}$ (in other words, $\{\hat{\mbf{u}}_{j_1},\ldots,\hat{\mbf{u}}_{j_t}\}$ is the largest linearly independent subset of $\{\hat{\mbf{u}}_1,\ldots,\hat{\mbf{u}}_{k-1}\}$).
With that, some of the conditions in \eqref{e:pr_lin_1_cond_k_0} can possibly be removed so that
\begin{align}
\label{e:pr_lin_1_cond_k_1}
Q_N&(\mbf{x}_{m'}|\mbf{x}_{m_1},\ldots,\mbf{x}_{m_k}) \nonumber \\
&= \Pr(\hat{\mbf{x}}=\hat{\mbf{u}}G|\hat{\mbf{x}}_{j_1}=\hat{\mbf{u}}_{j_1}G,\ldots,\hat{\mbf{x}}_{j_t}=\hat{\mbf{u}}_{j_t}G).
\end{align}
Due to linearity, the right-hand-side of \eqref{e:pr_lin_1_cond_k_1} can be rewritten as
\begin{align}
\label{e:pr_lin_1_cond_k_2}
\Pr&(\hat{\mbf{x}}=\hat{\mbf{u}}G|\hat{\mbf{x}}_{j_1}=\hat{\mbf{u}}_{j_1}G,\ldots,\hat{\mbf{x}}_{j_t}=\hat{\mbf{u}}_{j_t}G) \nonumber \\
&= \Pr(\hat{\mbf{x}}=\widetilde{\mbf{u}}\widetilde{G}|\hat{\mbf{x}}_{j_1}=\widetilde{\mbf{u}}_{j_1}\widetilde{G},\ldots,\hat{\mbf{x}}_{j_t}=\widetilde{\mbf{u}}_{j_t}\widetilde{G})
\end{align}
where $\widetilde{\mbf{u}}_{j_i}$ is defined as an all zeros vector with a single $1$ at position $i,\forall i=1,\ldots,t$,
\begin{equation}
\widetilde{G} \triangleq
\left(
\begin{array}{c}
\hat{\mbf{x}}_{j_1} \\
\vdots \\
\hat{\mbf{x}}_{j_t} \\
\hline \\
\widetilde{G}'
\end{array}
\right)
\end{equation}
and $\widetilde{G}'$ is a uniformly distributed $(K-t)\times N$ matrix of $q$-ary coefficients.
The transformation in \eqref{e:pr_lin_1_cond_k_2} can be achieved by introducing a full rank $K\times K$ matrix $A$, such that $G=A\widetilde{G}$, $\hat{\mbf{u}}=\widetilde{\mbf{u}} A^{-1}$, and $\hat{\mbf{u}}_{j_i}=\widetilde{\mbf{u}}_{j_i} A^{-1}$ (note that $\widetilde{\mbf{u}}\in\{\{1,2,\ldots,q^K-1\}\setminus\{2^{i-1}:1\leq i\leq t\}\}$.
With the transformation, it is obvious that the conditioning in the right-hand-side of \eqref{e:pr_lin_1_cond_k_2} can be changed to
\begin{equation}
\label{e:pr_lin_1_cond_k_3}
Q_N(\mbf{x}_{m'}|\mbf{x}_{m_1},\ldots,\mbf{x}_{m_k})
= \Pr(\hat{\mbf{x}}=\widetilde{\mbf{u}}\widetilde{G}|\widetilde{\mbf{u}}_{j_1},\ldots,\widetilde{\mbf{u}}_{j_t}).
\end{equation}
There are two cases for $\Pr(\hat{\mbf{x}}=\widetilde{\mbf{u}}\widetilde{G}|\widetilde{\mbf{u}}_{j_1},\ldots,\widetilde{\mbf{u}}_{j_t})$, depending on the value of $\widetilde{\mbf{u}}$, or more specifically its $K-t$ rightmost elements $\widetilde{\mbf{u}}_{t+1:K}$: \\
When $\widetilde{\mbf{u}}_{t+1:K}\neq0$ ($\hat{\mbf{u}}\notin \hat{S}_\mbf{u}$ and therefore $\mbf{u}_{m'}\notin S_\mbf{u}$), then
\begin{equation}
\label{e:pr_lin_1_cond_k_u_notinSu}
\Pr(\hat{\mbf{x}}=\widetilde{\mbf{u}}\widetilde{G}|\widetilde{\mbf{u}}_{j_1},\ldots,\widetilde{\mbf{u}}_{j_t}) = q^{-N}.
\end{equation}
When $\widetilde{\mbf{u}}_{t+1:K}=0$ ($\hat{\mbf{u}}\in \hat{S}_\mbf{u}$ and therefore $\mbf{u}_{m'}\in S_\mbf{u}$), then
\begin{equation}
\label{e:pr_lin_1_cond_k_u_inSu}
\Pr(\hat{\mbf{x}}=\widetilde{\mbf{u}}\widetilde{G}|\widetilde{\mbf{u}}_{j_1},\ldots,\widetilde{\mbf{u}}_{j_t})
= \left\{
\begin{array}{ll}
1 & : \hat{\mbf{x}} = \hat{\mathcal{L}}(\hat{\mbf{x}_1},\ldots,\hat{\mbf{x}_{k-1}}) \\
0 & : \hat{\mbf{x}} \neq \hat{\mathcal{L}}(\hat{\mbf{x}_1},\ldots,\hat{\mbf{x}_{k-1}})
\end{array}
\right.
\end{equation}
where $\hat{\mathcal{L}}$ is the unique linear transformation such that
\begin{equation}
\hat{\mbf{u}} = \hat{\mathcal{L}}(\hat{\mbf{u}}_{j_1},\ldots,\hat{\mbf{u}}_{j_t}).
\end{equation}
The transformation $\mathcal{L}$ can be taken as
\begin{equation}
\mathcal{L}(\mbf{u}_{m_1},\ldots,\mbf{u}_{m_k}) = \hat{\mathcal{L}}(\hat{\mbf{u}}_{j_1},\ldots,\hat{\mbf{u}}_{j_t}) + \mbf{u}_{m_k}.
\end{equation}
Plugging \eqref{e:pr_lin_1_cond_k_u_notinSu} and \eqref{e:pr_lin_1_cond_k_u_inSu} into \eqref{e:pr_lin_1_cond_k_1} completes the proof.
\end{proof}

% -----------------------------------------------------------
% -----------------------------------------------------------
\section{Bounding $\overline{P}_{e,m}$ for the linear ensemble}
Let us use the results of the previous section, to repeat Gallager's analysis for the linear ensemble.

Before proceeding, let us make a general clarification regarding the linear ensemble.
Due to its linear structure, the ensemble's codeword distribution is uniform.
As such, it can only achieve the error exponent of channels whose exponent is maximized by a uniform input distribution.
Denote by $\mathcal{U}$ the class of all channels obeying the above.
It is assumed hereafter that all channels considered belong to $\mathcal{U}$.\footnotemark{}
\footnotetext{An alternative treatment is possible by examining the exponent resulting from a uniform input distribution regardless of the channel.}

% -----------------------------------------------------------
\subsection{The pairwise intersection of error events}
Begin by noting from \eqref{e:pr_lin_1_cond_1_2} that the distribution of a codeword conditioned on a single other codeword is identical for the random and linear ensembles.
Then from \eqref{e:def_rnd_alpha} we can say that also for the linear ensemble
\begin{equation}
\label{e:def_lin_alpha}
\Pr(A_{m'}(\mbf{x}_m,\mbf{y}))=\alpha.
\end{equation}
Unlike before, the distribution of a codeword conditioned on two other codewords differs for the random and linear ensembles, so that the probability of the intersection of two error events is also different.

\begin{lem}
\label{t:lem_lin_double_error_event}
Let $\mbf{u}_m,\mbf{u}_{m'},\mbf{u}_{m''}$ be distinct message vectors, let $\mbf{x}_m$ be some codeword from the linear ensemble such that $q^K>2$, and let $\mbf{y}$ be some channel output, then the ensemble average of the probability of the intersection of two error events is given by the following expressions: \\
When $\mbf{u}_m \notin \textnormal{span}^*\{\mbf{u}_{m'},\mbf{u}_{m''}\}$,
\begin{equation}
\label{e:lem_pr_lin_double_error_event_1}
\Pr(A_{m'}(\mbf{x}_m,\mbf{y})) \cap A_{m''}(\mbf{x}_m,\mbf{y})) = \alpha^2.
\end{equation}
When $\mbf{u}_m \in \textnormal{span}^*\{\mbf{u}_{m'},\mbf{u}_{m''}\}$,
\begin{equation}
\label{e:lem_pr_lin_double_error_event_2}
\Pr(A_{m'}(\mbf{x}_m,\mbf{y})) \cap A_{m''}(\mbf{x}_m,\mbf{y})) \leq \alpha
\end{equation}
where $\alpha\triangleq\Pr(A_{m'}(\mbf{x}_m,\mbf{y}))$. Note that $\abs{\textnormal{span}^*\{\mbf{u}_{m'},\mbf{u}_{m''}\}}=q$.
\end{lem}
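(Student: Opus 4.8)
The plan is to evaluate the joint conditional law $Q_N(\mbf{x}_{m'},\mbf{x}_{m''}|\mbf{x}_m)$ that enters \eqref{e:pr_double_error_event} by factoring it with the chain rule and then reading off each factor from the results already established. Concretely, I would write
\begin{equation*}
Q_N(\mbf{x}_{m'},\mbf{x}_{m''}|\mbf{x}_m) = Q_N(\mbf{x}_{m'}|\mbf{x}_m)\,Q_N(\mbf{x}_{m''}|\mbf{x}_m,\mbf{x}_{m'}),
\end{equation*}
where the first factor is $q^{-N}$ by the pairwise independence \eqref{e:pr_lin_1_cond_1_2}, and the second factor is precisely the object described by Theorem~\ref{t:thm_pr_lin_1_cond_k} in the case $k=2$, taking $\mbf{x}_{m''}$ as the conditioned codeword and $\{\mbf{x}_m,\mbf{x}_{m'}\}$ as the conditioning set, so that there $S_\mbf{u}=\textnormal{span}^*\{\mbf{u}_m,\mbf{u}_{m'}\}$.

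Before splitting into cases I would record the key symmetry observation. Since $\textnormal{span}^*\{\mbf{u}_{m'},\mbf{u}_{m''}\}$ is a translated one-dimensional subspace, it is an affine line containing exactly $q$ points (which also verifies the final remark of the lemma), and by the base-point indifference noted in the definition of $\textnormal{span}^*$ the line through any two of the three message vectors is the same line. Consequently the three membership statements $\mbf{u}_m\in\textnormal{span}^*\{\mbf{u}_{m'},\mbf{u}_{m''}\}$, $\mbf{u}_{m'}\in\textnormal{span}^*\{\mbf{u}_m,\mbf{u}_{m''}\}$ and $\mbf{u}_{m''}\in\textnormal{span}^*\{\mbf{u}_m,\mbf{u}_{m'}\}$ are mutually equivalent, each asserting that the three distinct vectors are collinear. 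This is exactly what bridges the hypothesis of the lemma (phrased through $\mbf{u}_m$) to the membership test on the conditioned index $\mbf{u}_{m''}$ demanded by Theorem~\ref{t:thm_pr_lin_1_cond_k}.

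With this in hand the two cases are immediate. In the non-collinear case the equivalence gives $\mbf{u}_{m''}\notin S_\mbf{u}$, so \eqref{e:thm_pr_lin_1_cond_k_u_notinSu} yields $Q_N(\mbf{x}_{m''}|\mbf{x}_m,\mbf{x}_{m'})=q^{-N}$ and hence $Q_N(\mbf{x}_{m'},\mbf{x}_{m''}|\mbf{x}_m)=q^{-2N}$, matching the random ensemble \eqref{e:pr_rnd_1_cond_1&2}; substituting into \eqref{e:pr_double_error_event} the double sum factors into the product of two copies of $\sum_{\mbf{x}}q^{-N}\sigma(\mbf{x},\mbf{x}_m,\mbf{y})$, each equal to $\alpha$ via \eqref{e:pr_single_error_event}, giving \eqref{e:lem_pr_lin_double_error_event_1}. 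In the collinear case the equivalence gives $\mbf{u}_{m''}\in S_\mbf{u}$, so \eqref{e:thm_pr_lin_1_cond_k_u_inSu} forces $\mbf{x}_{m''}$ to equal a fixed linear image of $\mbf{x}_m$ and $\mbf{x}_{m'}$; the double sum in \eqref{e:pr_double_error_event} then collapses to $\sum_{\mbf{x}_{m'}}q^{-N}\sigma(\mbf{x}_{m'},\mbf{x}_m,\mbf{y})\,\sigma(\mathcal{L}(\mbf{x}_m,\mbf{x}_{m'}),\mbf{x}_m,\mbf{y})$, and bounding the second characteristic function by $1$ leaves $\sum_{\mbf{x}_{m'}}q^{-N}\sigma(\mbf{x}_{m'},\mbf{x}_m,\mbf{y})=\alpha$, which is \eqref{e:lem_pr_lin_double_error_event_2}.

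The arithmetic is light, so the step I would treat as the crux is the symmetry reduction: one must be sure that the collinearity condition does not depend on which codeword is singled out as conditioned, since the factorization privileges $\mbf{x}_{m''}$ whereas the lemma privileges $\mbf{x}_m$. Establishing that the affine line has exactly $q$ points and is base-point independent is what legitimizes this interchange. I would also note that the hypothesis $q^K>2$ is what guarantees three distinct message vectors exist, so that the collinear case is non-vacuous and the statement is well posed.
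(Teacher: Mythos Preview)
Your proposal is correct and follows essentially the same route as the paper: factor $Q_N(\mbf{x}_{m'},\mbf{x}_{m''}|\mbf{x}_m)$ via the chain rule, use pairwise independence for the first factor, invoke Theorem~\ref{t:thm_pr_lin_1_cond_k} with $k=2$ for the second, and in the collinear case bound the extra characteristic function by $1$. The one place where you are actually more careful than the paper is the symmetry reduction you flag as the crux: the paper applies Theorem~\ref{t:thm_pr_lin_1_cond_k} in Case~1 to $Q_N(\mbf{x}_{m''}|\mbf{x}_{m'},\mbf{x}_m)$ but in Case~2 writes down $Q_N(\mbf{x}_m|\mbf{x}_{m'},\mbf{x}_{m''})$ and then plugs it into an expression that needs $Q_N(\mbf{x}_{m''}|\mbf{x}_{m'},\mbf{x}_m)$, silently relying on the fact that collinearity of three points on an affine line is a symmetric relation; you make this equivalence explicit, which is the right thing to do.
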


\begin{proof}
Begin by manipulating \eqref{e:pr_double_error_event} as follows:\emph{}
\begin{align}
\label{e:pr_lin_double_error_event_1}
\Pr&(A_{m'}(\mbf{x}_m,\mbf{y}) \cap A_{m''}(\mbf{x}_m,\mbf{y})) \nonumber \\
&= \sum_{\substack{\mbf{x}_{m'}\in \mathbb{F}_q^N\\\mbf{x}_{m''}\in \mathbb{F}_q^N}} Q_N(\mbf{x}_{m'},\mbf{x}_{m''}|\mbf{x}_m) \sigma(\mbf{x}_{m'},\mbf{x}_m,\mbf{y}) \sigma(\mbf{x}_{m''},\mbf{x}_m,\mbf{y}) \nonumber \\
&= \sum_{\mbf{x}_{m'}\in \mathbb{F}_q^N} Q_N(\mbf{x}_{m'}|\mbf{x}_m) \sigma(\mbf{x}_{m'},\mbf{x}_m,\mbf{y}) \nonumber \\
&\qquad \cdot \sum_{\mbf{x}_{m''}\in \mathbb{F}_q^N} Q_N(\mbf{x}_{m''}|\mbf{x}_{m'},\mbf{x}_m) \sigma(\mbf{x}_{m''},\mbf{x}_m,\mbf{y}) \nonumber \\
&= q^{-N} \sum_{\mbf{x}_{m'}\in \mathbb{F}_q^N} \sigma(\mbf{x}_{m'},\mbf{x}_m,\mbf{y}) \nonumber \\
&\qquad \cdot \sum_{\mbf{x}_{m''}\in \mathbb{F}_q^N} Q_N(\mbf{x}_{m''}|\mbf{x}_{m'},\mbf{x}_m) \sigma(\mbf{x}_{m''},\mbf{x}_m,\mbf{y})
\end{align}
where the second equality is due to Bayes' Law and the third is due to \eqref{e:pr_lin_1_cond_1_2}.
We continue by analysing two cases: \\
When $\mbf{u}_m \notin \textnormal{span}^*\{\mbf{u}_{m'},\mbf{u}_{m''}\}$ then by \eqref{e:thm_pr_lin_1_cond_k_u_notinSu}, $Q_N(\mbf{x}_{m''}|\mbf{x}_{m'},\mbf{x}_m)=q^{-N}$ and
\begin{equation}
\label{e:pr_lin_double_error_event_2}
\Pr(A_{m'}(\mbf{x}_m,\mbf{y})) \cap A_{m''}(\mbf{x}_m,\mbf{y})) = \alpha^2.
\end{equation}
When $\mbf{u}_m \in \textnormal{span}^*\{\mbf{u}_{m'},\mbf{u}_{m''}\}$ then by \eqref{e:thm_pr_lin_1_cond_k_u_inSu}
\begin{equation}
\label{e:pr_lin_1_cond_2_u_inS}
Q_N(\mbf{x}_m|\mbf{x}_{m'},\mbf{x}_{m''})
= \left\{
\begin{array}{ll}
1 & : \mbf{x}_m = \mathcal{L}(\mbf{x}_{m'},\mbf{x}_{m''}) \\
0 & : \mbf{x}_m \neq \mathcal{L}(\mbf{x}_{m'},\mbf{x}_{m''})
\end{array}
\right.
\end{equation}
where $\mathcal{L}$ is any linear transformation such that
\begin{equation}
\mbf{u}_m = \mathcal{L}(\mbf{u}_{m'},\mbf{u}_{m''}).
\end{equation}
Plugging \eqref{e:pr_lin_1_cond_2_u_inS} into \eqref{e:pr_lin_double_error_event_1} and upper bounding by taking $\sigma(\mbf{x}_{m''},\mbf{x}_m,\mbf{y})=1$ whenever $Q_N(\mbf{x}_m|\mbf{x}_{m'},\mbf{x}_{m''})\neq0$ results in
\begin{equation}
\label{e:pr_lin_double_error_event_3}
\Pr(A_{m'}(\mbf{x}_m,\mbf{y})) \cap A_{m''}(\mbf{x}_m,\mbf{y})) \leq \alpha.
\end{equation}
\end{proof}

% -----------------------------------------------------------
\subsection{Bounding the union of error events}
Attempting to use \eqref{e:pr_error_m_x_y_down} to lower bound the union of error events, as was done for the random ensemble, results in a term which is always negative leading to a useless bound, (excluding the case when $q=2$, which behaves exactly like the random ensemble).
The following lemma provides a useful bound by utilizing de-Caen's lower bound on the probability of a union of events \cite{j:deCaenALowerBound} instead.
\begin{lem}
\label{t:lem_pr_lin_error_m_x_y}
Let $\mbf{x}_m$ be some codeword from the linear ensemble such that $q^K>2$ (corresponding to message vector $\mbf{u}_m$), let $\mbf{y}$ be some channel output, and let $\rho$ be a constant such that $\rho\geq1$, then the ensemble average of the probability of the union of all pairwise error events $m'\neq m$ can be upper and lower bounded as
\begin{equation}
\label{e:lem_pr_lin_error_m_x_y}
\frac{(M-1)\alpha}{q} - [(M-1)\alpha]^\rho \leq \Pr\left(\bigcup_{m'\neq m}A_{m'}(\mbf{x}_m,\mbf{y})\right) \leq (M-1)\alpha
\end{equation}
where $\alpha\triangleq\Pr(A_{m'}(\mbf{x}_m,\mbf{y}))$.
\end{lem}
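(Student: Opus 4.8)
The upper bound is immediate: it is the union bound \eqref{e:pr_error_m_x_y_up}, unchanged from Gallager's ensemble since it uses only the marginal $\Pr(A_{m'}(\mbf{x}_m,\mbf{y}))=\alpha$ from \eqref{e:def_lin_alpha}. So the plan concentrates entirely on the lower bound. Since the naive Bonferroni bound \eqref{e:pr_error_m_x_y_down} degenerates here (its correction term scales like $\alpha$ rather than $\alpha^2$ on the many collinear pairs, as noted in the text), I would instead invoke de-Caen's inequality \cite{j:deCaenALowerBound}, which for the events $\{A_{m'}(\mbf{x}_m,\mbf{y})\}_{m'\neq m}$ reads
\[
\Pr\Big(\bigcup_{m'\neq m}A_{m'}(\mbf{x}_m,\mbf{y})\Big)\geq\sum_{m'\neq m}\frac{\Pr(A_{m'}(\mbf{x}_m,\mbf{y}))^2}{\sum_{m''\neq m}\Pr(A_{m'}(\mbf{x}_m,\mbf{y})\cap A_{m''}(\mbf{x}_m,\mbf{y}))}.
\]
By \eqref{e:def_lin_alpha} every numerator equals $\alpha^2$, so the whole task reduces to upper bounding the denominator uniformly in $m'$, for which I would classify the inner sum using Lemma \ref{t:lem_lin_double_error_event}.

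The key combinatorial step is the count. For fixed $m'$, the diagonal term $m''=m'$ contributes $\Pr(A_{m'}(\mbf{x}_m,\mbf{y}))=\alpha$. By Lemma \ref{t:lem_lin_double_error_event}, an off-diagonal $m''$ contributes at most $\alpha$ exactly when $\mbf{u}_m\in\spann^*\{\mbf{u}_{m'},\mbf{u}_{m''}\}$, i.e. when $\mbf{u}_{m''}$ lies on the affine line through $\mbf{u}_m$ and $\mbf{u}_{m'}$; that line has $q$ points, two of which are $\mbf{u}_m,\mbf{u}_{m'}$, leaving exactly $q-2$ such ``bad'' indices. The remaining $M-q$ indices each contribute $\alpha^2$. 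Hence the denominator is at most $(q-1)\alpha+(M-q)\alpha^2$, independent of $m'$, and summing the $\alpha^2$ numerators over the $M-1$ choices of $m'$ yields
\[
\Pr\Big(\bigcup_{m'\neq m}A_{m'}(\mbf{x}_m,\mbf{y})\Big)\geq\frac{(M-1)\alpha}{(q-1)+(M-q)\alpha}.
\]

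Finally I would recover the stated form by a two-case split on $(M-1)\alpha$, mirroring the passage from \eqref{e:pr_rnd_error_m_x_y_0} to \eqref{e:pr_rnd_error_m_x_y}. If $(M-1)\alpha\geq1$ then $[(M-1)\alpha]^\rho\geq(M-1)\alpha\geq(M-1)\alpha/q$, so the claimed right-hand side is non-positive and the bound holds trivially. If $(M-1)\alpha<1$ then $(M-q)\alpha\leq(M-1)\alpha<1$, so the denominator above is strictly less than $q$, giving $\frac{(M-1)\alpha}{(q-1)+(M-q)\alpha}>\frac{(M-1)\alpha}{q}\geq\frac{(M-1)\alpha}{q}-[(M-1)\alpha]^\rho$.

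I expect the main obstacle to be the combinatorial counting of the collinear triples, namely recognizing that the order-$\alpha$ intersections correspond precisely to the $q-2$ message vectors collinear with $\mbf{u}_m$ and $\mbf{u}_{m'}$, together with the diagonal. This is where the linear structure bites and produces the constant factor $1/q$ that distinguishes the bound from Gallager's; being subexponential, this factor will leave the error exponent unchanged in the subsequent averaging.
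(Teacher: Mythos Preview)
Your proposal is correct and follows essentially the same route as the paper: de-Caen's inequality, the same $(q-2)$ vs.\ $(M-q)$ split of the inner sum via Lemma~\ref{t:lem_lin_double_error_event}, and the same two-case analysis on $(M-1)\alpha$. The paper inserts one additional throwaway weakening, replacing $(M-q)\alpha$ by $(M-1)\alpha$ in the denominator before the case split, but your direct argument that $(M-q)\alpha<1$ when $(M-1)\alpha<1$ reaches the same conclusion without it.
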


\begin{proof}
In our settings, de-Caen's lower bound on the probability of a union of events can be expressed as
\begin{align}
\label{e:def_de_caen_0}
\Pr& \left( \bigcup_{m'\neq m} A_{m'}(\mbf{x}_m,\mbf{y}) \right) \nonumber \\
&\geq \sum_{m'\neq m} \frac{[\Pr(A_{m'}(\mbf{x}_m,\mbf{y}))]^2}{\sum_{m''\neq m} \Pr(A_{m'}(\mbf{x}_m,\mbf{y}) \cap A_{m''}(\mbf{x}_m,\mbf{y}))}.
\end{align}
To simplify, let us rewrite \eqref{e:def_de_caen_0} as
\begin{align}
\label{e:def_de_caen_1}
\Pr& \left( \bigcup_{m'\neq m} A_{m'}(\mbf{x}_m,\mbf{y}) \right) \nonumber \\
&\geq \sum_{m'\neq m} \frac{\Pr(A_{m'}(\mbf{x}_m,\mbf{y}))}{\sum_{m''\neq m} \frac{\Pr(A_{m'}(\mbf{x}_m,\mbf{y}) \cap A_{m''}(\mbf{x}_m,\mbf{y}))}{\Pr(A_{m'}(\mbf{x}_m,\mbf{y}))}}.
\end{align}
Beginning with the denominator of \eqref{e:def_de_caen_1}
\begin{align}
\label{e:def_de_caen_partial}
\sum_{m''\neq m}& \frac{\Pr(A_{m'}(\mbf{x}_m,\mbf{y}) \cap A_{m''}(\mbf{x}_m,\mbf{y}))}{\Pr(A_{m'}(\mbf{x}_m,\mbf{y}))} \nonumber \\
&= 1 + \sum_{m''\neq m,m'} \frac{\Pr(A_{m'}(\mbf{x}_m,\mbf{y}) \cap A_{m''}(\mbf{x}_m,\mbf{y}))}{\Pr(A_{m'}(\mbf{x}_m,\mbf{y}))} \nonumber \\
&= 1 + \alpha^{-1} \sum_{m''\neq m,m'} \Pr(A_{m'}(\mbf{x}_m,\mbf{y}) \cap A_{m''}(\mbf{x}_m,\mbf{y})) \nonumber \\
&\leq 1 + (q^K-q)\alpha + (q-2)
\end{align}
where the inequality is achieved by breaking up the $q^K-2$ sum indexes ($m''\neq m,m'$) into two groups based on whether $\mbf{u}_m\in\textnormal{span}^*(\mbf{u}_{m'},\mbf{u}_{m''})$ or not, and then using \eqref{e:lem_pr_lin_double_error_event_1} and \eqref{e:lem_pr_lin_double_error_event_2}.
Plugging \eqref{e:def_de_caen_partial} into \eqref{e:def_de_caen_1} results in
\begin{align}
\label{e:lem_pr_lin_error_m_x_y_down_1}
\Pr \left( \bigcup_{m'\neq m} A_{m'}(\mbf{x}_m,\mbf{y}) \right)
&\geq \frac{(q^K-1)\alpha}{(q^K-q)\alpha + (q-1)} \nonumber \\
&\geq \frac{(q^K-1)\alpha}{(q^K-1)\alpha + (q-1)}.
\end{align}
We continue to analyze the last inequality of \eqref{e:lem_pr_lin_error_m_x_y_down_1} for two cases: \\
When $(q^K-1)\alpha<1$,
\begin{align}
\label{e:lem_pr_lin_error_m_x_y_down_2}
\frac{(q^K-1)\alpha}{(q^K-1)\alpha + (q-1)}
&\geq \frac{(q^K-1)\alpha}{1+(q-1)} \nonumber \\
&\geq \frac{(q^K-1)\alpha}{q} - [(q^K-1)\alpha]^\rho
\end{align}
for any constant $\rho\geq1$.
When $(q^K-1)\alpha\geq 1$, the last inequality of \eqref{e:lem_pr_lin_error_m_x_y_down_2} is always negative and is thus a trivial lower bound.

Finally, plugging \eqref{e:lem_pr_lin_error_m_x_y_down_2} back into \eqref{e:lem_pr_lin_error_m_x_y_down_1} for the lower bound and using the union bound as the upper bound, while replacing $q^K$ with $M$, we arrive at \eqref{e:lem_pr_lin_error_m_x_y}, which can be regarded as the linear ensemble alternative to \eqref{e:pr_rnd_error_m_x_y}.
\end{proof}

% -----------------------------------------------------------
\subsection{Bounding $\overline{P}_{e,m}$}
\begin{thm}
\label{t:thm_lin_tight}
Let $\overline{P}_{e,m}(N)$ be the linear ensemble's average error probability upon transmission of a codeword corresponding to the $m$'th message, where $m\in[0,M-1]$, over a channel from the class $\mathcal{U}$.
Then, $\overline{P}_{e,m}(N)$ is independent of $m$, and for any $R<C$
\begin{equation}
-\lim_{N\rightarrow\infty} \frac{\log(\overline{P}_{e,m}(N))}{N} = E_r(R)
\end{equation}
where $N$, $R$, $C$ and $E_r(R)$ are the code's dimension, rate, the channel's capacity, and Gallager's random-coding error exponent \cite{b:GallagerInformationTheory}, respectively.
\end{thm}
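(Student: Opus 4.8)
The plan is to sandwich $\overline{P}_{e,m}(N)$ between two quantities that share the common exponent $E_r(R)$, exactly as Gallager does for the random ensemble, but feeding in the bounds of Lemma \ref{t:lem_pr_lin_error_m_x_y} in place of \eqref{e:pr_rnd_error_m_x_y}. First I would dispose of the dependence on $m$. Since each codeword is $\mbf{x}_m=\mbf{u}_mG+\mbf{v}$ with $\mbf{v}$ uniform and independent of $G$, the marginal law $Q_N(\mbf{x}_m)$ of the transmitted codeword is uniform for every $m$. Moreover, by Theorem \ref{t:thm_pr_lin_1_cond_k} and Lemmas \ref{t:lem_lin_double_error_event}--\ref{t:lem_pr_lin_error_m_x_y}, the competing codewords enter \eqref{e:pe_ensemble} only through the differences $\mbf{u}_{m'}-\mbf{u}_m$, and as $m'$ ranges over all indices $\neq m$, these differences range over all nonzero vectors of $\mathbb{F}_q^K$ regardless of $m$. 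Hence the span structure governing the error events, and therefore $\Pr(error|m,\mbf{x}_m,\mbf{y})$, is invariant under relabelling of $m$, giving the claimed independence. I may then fix $m$ and write $\alpha=\alpha(\mbf{x}_m,\mbf{y})$.

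For the upper bound I would substitute the right inequality of \eqref{e:lem_pr_lin_error_m_x_y}, namely $\Pr(error|m,\mbf{x}_m,\mbf{y})\leq\min(1,(M-1)\alpha)\leq[(M-1)\alpha]^\rho$ for $0\leq\rho\leq1$, into \eqref{e:pe_ensemble}. This is term-for-term the object Gallager bounds in the random-coding computation; carrying out the remaining sums over $\mbf{x}_m$ and $\mbf{y}$ against the uniform $Q_N$ appropriate to the class $\mathcal{U}$, and optimizing over $\rho$, reproduces verbatim the derivation of $E_r(R)$ in \cite{b:GallagerInformationTheory}, yielding $-\tfrac{1}{N}\log\overline{P}_{e,m}(N)\geq E_r(R)-o(1)$.

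The lower bound is where the new lemma does its work, yet the argument is almost free. The left inequality of \eqref{e:lem_pr_lin_error_m_x_y} has \emph{exactly} the functional form of Gallager's \eqref{e:pr_rnd_error_m_x_y}, the sole modification being that the leading coefficient is $q^{-1}$ rather than $1$. Substituting it into \eqref{e:pe_ensemble} gives $\overline{P}_{e,m}(N)\geq q^{-1}(M-1)\sum_{\mbf{x}_m,\mbf{y}}Q_N(\mbf{x}_m)P_N(\mbf{y}|\mbf{x}_m)\,\alpha-\sum_{\mbf{x}_m,\mbf{y}}Q_N(\mbf{x}_m)P_N(\mbf{y}|\mbf{x}_m)\,[(M-1)\alpha]^\rho$. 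Because $q$ is fixed while $N\to\infty$, we have $\tfrac{1}{N}\log q\to0$, so the factor $q^{-1}$ cannot alter any exponential rate. Consequently the entire exponent computation that Gallager performs on \eqref{e:pr_rnd_error_m_x_y}---summing over $\mbf{x}_m$ and $\mbf{y}$ and optimizing over $1\leq\rho\leq2$ so that the $\rho$-power correction fails to erode the leading term---applies unchanged to \eqref{e:lem_pr_lin_error_m_x_y} and delivers the identical lower-bound exponent, whence $-\tfrac{1}{N}\log\overline{P}_{e,m}(N)\leq E_r(R)+o(1)$ for every $R<C$. Combining the two bounds forces the limit to equal $E_r(R)$.

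I expect the main obstacle to reside entirely in the lower bound: one must confirm that Gallager's delicate balancing of the leading pairwise term against the $\rho$-power correction---the mechanism by which $E_r(R)$, and not the larger straight-line value, is recovered even below $R_{\textnormal{cr}}$---survives the insertion of the coefficient $q^{-1}$. The resolution is that this coefficient is subexponential, so Gallager's 1973 analysis \cite{j:GallagerTheRandom} transfers intact, and the tightness of the random-coding exponent carries over to the linear ensemble.
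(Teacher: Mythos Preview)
Your proposal is correct and follows essentially the same route as the paper. Both arguments substitute the bounds of Lemma~\ref{t:lem_pr_lin_error_m_x_y} into \eqref{e:pe_ensemble}, observe that the extra factor $q^{-1}$ in the lower bound is subexponential in $N$ and therefore exponentially harmless, and then invoke Gallager's 1973 balancing of the linear term against the $[(M-1)\alpha]^\rho$ correction (the paper phrases this as $P_2$ being the sphere-packing bound, maximized at some $\rho>1$ for $R<R_{\textnormal{cr}}$, hence decaying faster than $P_1$); your explicit treatment of the independence on $m$ is a welcome addition that the paper leaves implicit.
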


\begin{proof}
Using \eqref{e:lem_pr_lin_error_m_x_y} and \eqref{e:pr_error_m_x_y}, \eqref{e:pe_ensemble} can be bounded as follows:
\begin{equation}
\label{e:pe_lin_bounds_1}
\frac{P_1}{q}-P_2 \leq \overline{P}_{e,m} \leq P_1
\end{equation}
where
\begin{align}
&P_1 = (M-1)\sum_{\mbf{x}_m} \sum_\mbf{y} Q_N(\mbf{x}_m) P_N(\mbf{y}|\mbf{x}_m) \Pr(A_{m'}(\mbf{x}_m,\mbf{y})) \nonumber \\
&P_2 = (M-1)^{\rho}\sum_{\mbf{x}_m} \sum_\mbf{y} Q_N(\mbf{x}_m) P_N(\mbf{y}|\mbf{x}_m) [\Pr(A_{m'}(\mbf{x}_m,\mbf{y}))]^{\rho}
\end{align}
and $\rho\geq1$.\footnotemark{}
\footnotetext{Maximizing $P_2$ over $\rho\geq1$ for $R<R_{\textnormal{cr}}$ results in the sphere packing exponent.}

In order to complete the proof, we need to show that the left-hand-side of \eqref{e:pe_lin_bounds_1} can be equated as
\begin{equation}
\label{e:pe_lin_bounds_2}
\frac{P_1}{q}-P_2 \doteq P_1
\end{equation}
where $\doteq$ implies asymptotic exponential equality in $N$, and where the dependence on $N$ is suppressed.
Firstly note that when $\rho$ is constrained to $\rho>0$, $P_2$ is Gallager's sphere packing bound \cite{b:GallagerInformationTheory}.
As such there exists a critical rate $R_{\textnormal{cr}}$, such that $P_2$ is maximized by some $\rho>1$ for all $R<R_{\textnormal{cr}}$.
This immediately implies that for those rates $P_2$ goes to zero with a larger exponent in $N$ than $P_1$, thus
\begin{equation}
\label{e:pe_lin_bounds_3}
\frac{P_1}{q}-P_2 \doteq \frac{P_1}{q}.
\end{equation}
The remainder is simple since
\begin{equation}
\label{e:pe_lin_bounds_4}
\frac{q^{NR}}{q} = q^{NR-1} = q^{N\left(R-\frac{1}{N}\right)} \doteq q^{NR},
\end{equation}
thus
\begin{equation}
\label{e:pe_lin_bounds_5}
\frac{P_1}{q} \doteq P_1.
\end{equation}

$R_{\textnormal{cr}}$ defined above is exactly the critical rate of the channel.
As such, it is well known that the random-coding exponent $E_r(R)$ is tight for rates $R\geq R_{\textnormal{cr}}$ \cite{b:GallagerInformationTheory}.
For rates $R<R_{\textnormal{cr}}$, $E_r(R)$ is given by
\begin{equation}
E_r(R) = - \lim_{N\rightarrow\infty} \frac{\log(P_1)}{N}
\end{equation}
and its tightness is thus implied by the correctness of \eqref{e:pe_lin_bounds_2}.
\end{proof}

% -----------------------------------------------------------
% -----------------------------------------------------------
\section{Extension for Lattices}
We proceed to show that our results extend for a random ensemble of unbounded lattices.
Specifically, let us show that Poltyrev's random-coding exponent for lattices \cite{j:PoltyrevOnCoding} is tight for Loeliger's ensemble \cite{j:LoeligerAveragingBounds}, below the critical Normalized-Log-Density (NLD) $\delta_{\textnormal{cr}}$.\footnotemark{}
\footnotetext{Poltyrev's random-coding exponent above $\delta_{\textnormal{cr}}$ coincides with the sphere lower bound exponent. See \cite{j:PoltyrevOnCoding}.}

% -----------------------------------------------------------
\subsection{The Channel}
Our proof is constructed for the additive noise channel
\begin{equation}
\label{e:lattice_channel}
\mbf{y} = \mbf{x}+\mbf{z}
\end{equation}
where $\mbf{y}$ is the channel output, $\mbf{x}$ is the channel input, and $\mbf{z}$ is the AWGN independent of $\mbf{x}$.
In order to simplify the notation, the \pdf of $\mbf{z}$ is denoted by $f(\mbf{z})$ rather than explicitly.\footnotemark{}
\footnotetext{Alternatively, $\mbf{z}$ can be taken to be distributed such that its \CDF is continuous and its \pdf is isotropic and monotonically non-increasing in $\norm{\mbf{z}}$.}

% -----------------------------------------------------------
\subsection{Loeliger's Ensemble}
A lattice in Loeliger's ensemble is constructed by taking a linear $(N,K,q)$ code as defined in \eqref{e:def_lin_code} with $\mbf{v}=0$, scaling it by a constant $\beta$ per dimension, and tiling $\mathbb{R}^N$ with it by \textit{construction-A} \cite{b:ConwaySpherePackings} (i.e. $\Lambda=\{\lambda:\lambda\mod (\beta q)^N \in \mathcal{C}_{\beta}\}$ where $\mathcal{C}_{\beta}$ is the scaled code).
The constant $\beta$ is selected such that the lattice density $\gamma$ is constant, thus,
\begin{align}
\label{e:def_lat_gamma}
&\gamma = \frac{q^K}{(\beta q)^N} \nonumber \\
&\Rightarrow \beta = q^{\frac{K-N}{N}} \gamma^{-\frac{1}{N}}.
\end{align}
Loeliger's ensemble for some selection of $(N,K,q,\gamma)$ is a uniformly distributed set of lattices, constructed by extending the codes of the $(N,K,q,\mbf{v}=0)$ linear ensemble to lattices, as discussed above.
Finally define Loeliger's asymptotic ensemble for some selection of $(N,K,\gamma)$ as the limit of Loeliger's $(N,K,q,\gamma)$ ensemble for $q\rightarrow\infty$.
Specifically, our proof is constructed for Loeliger's asymptotic ensemble.

% -----------------------------------------------------------
\subsection{Poltyrev's Random-Coding Exponent}
A widely accepted framework for lattice codes' error analysis is commonly referred to as Poltyrev's setting \cite{j:PoltyrevOnCoding}.
In Poltyrev's setting the code's shaping region, defined as the finite subset of the otherwise infinite set of lattice points, is ignored, and the lattice structure is analyzed for its coding (soft packing) properties only.
Consequently, the usual rate variable $R$ is infinite and replaced by the NLD, $\delta=\frac{\log(\gamma)}{N}$.

The average error probability for a \textit{uniformly distributed}\footnotemark{} ensemble of lattices transmitted over an AWGN channel with noise variance $\sigma^2$, can be expressed in the following exponential form \cite{j:PoltyrevOnCoding}, \cite{j:IngberFiniteDimensional}
\footnotetext{Such as Loeliger's asymptotic ensemble or Minkowski-Hlawka-Siegel \cite{j:SiegelAMeanValue,j:MacbeathAModifiedForm}.}
\begin{equation}
\label{eq_bound_mhs_exp}
\overline{P}_e
\leq e^{-N(E_{\textnormal{r}}(\delta)+o(1))}
\end{equation}
with
\begin{equation}
E_{\textnormal{r}}(\delta)
= \left\{
\begin{array}{ll}
(\delta^*-\delta) + \log{\frac{e}{4}},                                         & \delta\leq\delta_{\textnormal{cr}} \\
\frac{e^{2(\delta^*-\delta)}-2(\delta^*-\delta)-1}{2},
& \delta_{\textnormal{cr}}\leq\delta<\delta^* \\
0,                                                                             & \delta\geq\delta^*
\end{array} \right.
\end{equation}
\begin{align}
\delta^* &= \frac{1}{2}\log{\frac{1}{2\pi e\sigma^2}} \\
\delta_{\textnormal{cr}} &= \frac{1}{2}\log{\frac{1}{4\pi e\sigma^2}}
\end{align}
where $o(1)$ goes to zero asymptotically with $N$.

% -----------------------------------------------------------
\subsection{The Bounding Method}
The error probability for our channel \eqref{e:lattice_channel} can be trivially bounded as
\begin{equation}
\label{e:pe_lat_bounds_0}
\Pr(e, \norm{\mbf{z}}\leq r) \leq \overline{P}_{e,m} \leq \Pr(e, \norm{\mbf{z}}\leq r)+\Pr(\norm{\mbf{z}}> r)
\end{equation}
where $e$ is the maximum-likelihood decoding error event, and $r$ is an optimization parameter.
Due to the code's linearity and the additive noise's independence, we can assume, with no loss of generality, that index $m=0$ (i.e. $\lambda=\lambda_0$) is transmitted and analyze $\overline{P}_{e,m=0}$.
The leftmost term in the above inequality can then be expressed as
\begin{equation}
\label{e:pe_lat_bounds_1}
\Pr(e, \norm{\mbf{z}}\leq r)
= \Pr\left( \bigcup_{\lambda\neq\lambda_0} e_{\lambda} , \norm{\mbf{z}}\leq r \right)
\end{equation}
where $e_{\lambda}$ is the event that $\Pr(\lambda|\mbf{y},m=0)\geq\Pr(\lambda_0|\mbf{y},m=0)$.
The right-hand-side of \eqref{e:pe_lat_bounds_1} can be upper bounded by the union-bound as
\begin{equation}
\Pr\left( \bigcup_{\lambda\neq\lambda_0} e_{\lambda} , \norm{\mbf{z}}\leq r \right)
\leq \sum_{\lambda\neq\lambda_0} \Pr(e_{\lambda} , \norm{\mbf{z}}\leq r).
\end{equation}
Plugging the union-bound into the right-hand-side of \eqref{e:pe_lat_bounds_0} and optimizing for $r$ results in
\begin{equation}
\label{e:pe_lat_bounds_2}
\overline{P}_{e,m=0} \leq \sum_{\lambda\neq\lambda_0} \Pr(e_{\lambda},\norm{\mbf{z}}\leq r^*)+\Pr(\norm{\mbf{z}}>r^*)
\end{equation}
where $r^*$ is selected to minimize the expression.
Plugging $r^*$ into the left-hand-side of \eqref{e:pe_lat_bounds_0} results in
\begin{equation}
\label{e:pe_lat_bounds_3}
\overline{P}_{e,m=0} \geq \Pr(e,\norm{\mbf{z}}\leq r^*).
\end{equation}
The goal of our proof is to show that for NLDs $\delta: \delta<\delta_{\textnormal{cr}}$ the upper \eqref{e:pe_lat_bounds_2} and lower \eqref{e:pe_lat_bounds_3} bounds
asymptotically exponentially coincide.
Most of our development focuses on the analysis of the expression
\begin{equation}
\label{e:pe_lat_bounds_4}
\Pr(e,\norm{\mbf{z}}\leq r^*)
\end{equation}
which we refer to as the union of error events.

% -----------------------------------------------------------
\subsection{Analysis of the lattice ensemble via a linear ensemble}
The first step is to set the alphabet size $q$ of the ensemble to be large enough, such that all lattice points, relevant to the noise region governed by $\norm{\mbf{z}}\leq r^*$, are contained in a single code cube centered at the origin.
This, as is clarified shortly, enables analyzing the lattice ensemble using tools previously designed for the linear ensemble.
For a start, consider a specific lattice, and examine the expression
\begin{equation}
\Pr\left( \bigcup_{\lambda\neq\lambda_0} e_{\lambda} , \norm{\mbf{z}}\leq r^* \right).
\end{equation}
For our channel the probability of the pairwise error event $e_{\lambda}$ is non-zero only for lattice points $\lambda: \lambda\in\Ball(\mbf{z},\norm{\mbf{z}})$, where $\Ball(\mbf{z},\norm{\mbf{z}})$ is a Euclidean ball of radius $\norm{\mbf{z}}$ centered at $\mbf{z}$.
This together with $\norm{\mbf{z}}\leq r^*$, zeroes the probability of the pairwise error event for any lattice point with norm $\norm{\lambda}>2r^*$.
The condition on $q$ that forces all lattice points $\lambda: \norm{\lambda}\leq 2r^*$ to be contained in a single code cube centered at the origin\footnotemark{} is
\footnotetext{The Voronoi cell of the lattice is always contained in the centered code cube. See \cite{b:ZamirBook}.}
\begin{align}
\label{e:cond_lat_rstar}
&2r^*<\frac{\beta q}{2} \nonumber \\
&\Rightarrow r^*< 0.25\cdot q^{\frac{K}{N}} \gamma^{-\frac{1}{N}} \nonumber \\
&\Rightarrow q>(4r^*)^\frac{N}{K} \gamma^{\frac{1}{K}}.
\end{align}
The \textit{construction-A} of Loeliger's ensemble completely tiles the $N$-dimensional space with non-overlapping copies of the fundamental linear code.
Before scaling by $\beta$, the fundamental linear code is contained in the $[0,q-1]^N$ cube.
Clearly, the construction does not consist of a centered code cube.
One method to describe the centered code cube is by the following one-to-one re-mapping operation: Take a lattice point from the fundamental code cube and subtract $q$ from its dimensions that exceed $\frac{q}{2}$.
This re-mapping produces a one-to-one mapping from any message index $m'\in[0,M-1]$ to its corresponding codeword in the centered code cube.
A two dimensional illustration of the index re-mapping is depicted in Figure \ref{fig_const_a_remap}.
\input{const_a_remap.tpx}
With $q$ obeying \eqref{e:cond_lat_rstar}, we can say that
\begin{equation}
\label{e:cond_lat_to_lin}
\Pr\left( \bigcup_{\lambda\neq\lambda_0} e_{\lambda} , \norm{\mbf{z}}\leq r^* \right)
= \Pr\left( \bigcup_{m'\in[1,M-1]} e_{\lambda_{m'}} , \norm{\mbf{z}}\leq r^* \right)
\end{equation}
where $\lambda_{m'}$ corresponds to the codewords of the centered code cube.

Let us now switch back to the ensemble (rather than a specific lattice).
One should note, that mapping the lattice points belonging to the centered code cube to the indexes $m'$, enables analyzing them as the codewords of the underling linear code.
As such, one can say that they are distributed according to Theorem \ref{t:thm_pr_lin_1_cond_k} for the case where $\mbf{v}=0$ and $m=0$.
The distribution of lattice-point $\lambda_{m'}$ conditioned on $\lambda_0$ is thus
\begin{equation}
Q_N(\lambda_{m'}|\lambda_0) = Q_N(\lambda_{m'}) = q^{-N}.
\end{equation}
The distribution of lattice-point $\lambda_{m''}$ conditioned on $\lambda_{m'}$ and $\lambda_0$ is
\begin{equation}
Q_N(\lambda_{m''}|\lambda_{m'},\lambda_0) = Q_N(\lambda_{m''}|\lambda_{m'}) = q^{-N}
\end{equation}
when $\mbf{u}_{m''}\notin \spann(\mbf{u}_{m'})$, and
\begin{equation}
Q_N(\lambda_{m''}|\lambda_{m'})
= \left\{
\begin{array}{ll}
1 & : \lambda_{m''} = \mathcal{L}(\lambda_{m'}) \\
0 & : \lambda_{m''} \neq \mathcal{L}(\lambda_{m'})
\end{array}
\right.
\end{equation}
when $\mbf{u}_{m''}\in \spann(\mbf{u}_{m'})$, where $\mathcal{L}$ is any linear transformation such that $\mbf{u}_{m''} = \mathcal{L}(\mbf{u}_{m'})$.
One should note, that both $\spann(\cdot)$ and $\mathcal{L}(\cdot)$ are taken with respect to $F_q^N$.

% -----------------------------------------------------------
\subsection{Bounding the union of error events}
\begin{lem}
\label{l:lem_pr_lat_error_union}
Let the zero'th lattice point $\lambda_0$ from Loeliger's ensemble, that obeys \eqref{e:cond_lat_rstar}, be transmitted over an AWGN channel with noise distribution given by $f(\mbf{z})$, and let $r^*$ be defined by \eqref{e:pe_lat_bounds_2}, then the ensemble's average of the joint probability of the union of all pairwise error events $\lambda\neq\lambda_0$ and $\norm{\mbf{z}}\leq r^*$ can be upper and lower bounded as
\begin{equation}
\label{e:lem_pr_lat_error_union}
\frac{(M-1)}{q} \int_{\norm{\mbf{z}}\leq r^*} \alpha \cdot f(\mbf{z}) d\mbf{z}
\leq \Pr(e, \norm{\mbf{z}}\leq r^*)
\leq (M-1) \int_{\norm{\mbf{z}}\leq r^*} \alpha \cdot f(\mbf{z}) d\mbf{z}
\end{equation}
where $M=q^K$, $\alpha\triangleq\Pr(A_{m'}(\mbf{z})|\mbf{z})$, and $A_{m'}(\mbf{z})$ is the event that an index $m'\neq0$ is such as to cause an error for $m=0$ transmitted and $\mbf{y}=\mbf{z}$ received.
\end{lem}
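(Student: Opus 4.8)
The plan is to reproduce the linear-ensemble argument of Lemma \ref{t:lem_pr_lin_error_m_x_y}, but conditioned on the noise realization $\mbf{z}$ and then integrated against $f(\mbf{z})$ over $\norm{\mbf{z}}\leq r^*$. Throughout I fix $q$ obeying \eqref{e:cond_lat_rstar}, so that \eqref{e:cond_lat_to_lin} lets me treat the competing lattice points $\lambda_{m'}$ as codewords of the underlying linear ensemble, with the pairwise and triple-wise conditional distributions recorded just above the lemma (the $\mbf{v}=0$, $m=0$ specialization of Theorem \ref{t:thm_pr_lin_1_cond_k}). All probabilities below are ensemble averages at this fixed $q$.

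For the upper bound, I note that for each fixed $\mbf{z}$ the union bound gives $\Pr(\bigcup_{m'\neq 0} e_{\lambda_{m'}}\mid\mbf{z})\leq\sum_{m'\neq 0}\Pr(e_{\lambda_{m'}}\mid\mbf{z})=(M-1)\alpha$, where $\alpha=\alpha(\mbf{z})$ is the common pairwise error probability (all pairwise probabilities are equal by pairwise independence \eqref{e:pr_lin_1_cond_1_2}). Integrating against $f(\mbf{z})$ over $\norm{\mbf{z}}\leq r^*$ yields the right-hand inequality of \eqref{e:lem_pr_lat_error_union}.

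For the lower bound I apply de-Caen's inequality \eqref{e:def_de_caen_0} to the events $\{e_{\lambda_{m'}}\}_{m'\neq0}$ at fixed $\mbf{z}$. The denominator computation is identical to \eqref{e:def_de_caen_partial}: partitioning the indices $m''$ according to whether $\mbf{u}_{m''}\in\spann(\mbf{u}_{m'})$ (one self-term and $q-2$ scalar-multiple terms contributing at most $\alpha$ each, the remaining $q^K-q$ terms contributing $\alpha^2$) gives $\Pr(\bigcup_{m'\neq0}e_{\lambda_{m'}}\mid\mbf{z})\geq\frac{(M-1)\alpha}{(M-1)\alpha+(q-1)}$, exactly as in \eqref{e:lem_pr_lin_error_m_x_y_down_1}.

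The crucial point, which removes the subtractive $[(M-1)\alpha]^\rho$ term present in the linear case, is the choice of $r^*$. Since $\mbf{z}$ is isotropic and \eqref{e:cond_lat_rstar} places the relevant ball entirely inside the code cube, $\alpha(\mbf{z})$ equals the volume of $\Ball(\mbf{z},\norm{\mbf{z}})$ divided by $(\beta q)^N$; it therefore depends on $\mbf{z}$ only through $\norm{\mbf{z}}$ and is strictly increasing in it. Writing the objective of \eqref{e:pe_lat_bounds_2} as $(M-1)\int_{\norm{\mbf{z}}\leq r}\alpha f\,d\mbf{z}+\Pr(\norm{\mbf{z}}>r)$ and differentiating in $r$, the first-order optimality condition at $r^*$ reads $(M-1)\alpha(r^*)=1$. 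By monotonicity this forces $(M-1)\alpha(\mbf{z})\leq1$ throughout $\norm{\mbf{z}}\leq r^*$, whence $\frac{(M-1)\alpha}{(M-1)\alpha+(q-1)}\geq\frac{(M-1)\alpha}{q}$ pointwise on the entire integration region; integrating this against $f(\mbf{z})$ gives the left-hand inequality of \eqref{e:lem_pr_lat_error_union}. I expect the verification of $(M-1)\alpha(r^*)=1$ to be the main obstacle, since it is precisely what guarantees that the set where $(M-1)\alpha>1$ meets the integration region only on its boundary, and hence what makes the clean, subtraction-free lower bound valid; the remaining steps are direct transcriptions of the linear-ensemble computation.
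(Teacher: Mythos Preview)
Your proposal is correct and follows essentially the same route as the paper: condition on $\mbf{z}$, apply the union bound and de-Caen's inequality together with the pairwise-intersection estimates of Lemma~\ref{t:lem_lin_double_error_event} to obtain $\frac{(M-1)\alpha}{(M-1)\alpha+(q-1)}\leq\Pr(\cdot\mid\mbf{z})\leq(M-1)\alpha$, and then integrate over $\norm{\mbf{z}}\leq r^*$. The only cosmetic difference is in how $(M-1)\alpha\leq 1$ on $\{\norm{\mbf{z}}\leq r^*\}$ is argued---you differentiate in $r$ and read off the first-order condition, whereas the paper records (in a footnote) the equivalent $\min$-characterization $\int_{\norm{\mbf{z}}\leq r^*}(M-1)\alpha f+\int_{\norm{\mbf{z}}>r^*}f=\int\min\{(M-1)\alpha,1\}f$; note that your volume-ratio expression for $\alpha$ is exact only in the $q\to\infty$ limit (for finite $q$, $\alpha$ is the discrete grid count in \eqref{e:pr_lat_single_error_event}), but this is the same level of rigor the paper adopts.
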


\begin{proof}
Define the pairwise error event $A_{m'}(\mbf{z})$ as the event that an index $m'\neq0$ is such as to cause an error for $m=0$ transmitted and $\mbf{y}=\mbf{z}$ received, and proceed to calculate two expressions.
The first is the probability of the event $A_{m'}(\mbf{z})$ conditioned on $\mbf{z}$
\begin{align}
\label{e:pr_lat_single_error_event}
\Pr(A_{m'}(\mbf{z})|\mbf{z})
&= \sum_{\lambda_{m'}\in\beta\mathbb{Z}_q^N} Q_N(\lambda_{m'}) \mathds{1}(\lambda_{m'}\in\Ball(\mbf{z},\norm{\mbf{z}})|\mbf{z}) \nonumber \\
&= q^{-N} \sum_{\lambda_{m'}\in\beta\mathbb{Z}_q^N} \mathds{1}(\lambda_{m'}\in\Ball(\mbf{z},\norm{\mbf{z}})|\mbf{z}).
\end{align}
Clearly, $\Pr(A_{m'}(\mbf{z})|\mbf{z})$ is independent of $m'$.
Temporarily suppressing $\mbf{z}$, denote
\begin{equation}
\Pr(A_{m'}(\mbf{z})|\mbf{z}) \triangleq \alpha.
\end{equation}
From Lemma \ref{t:lem_lin_double_error_event}, the probability of the intersection of two error events, conditioned on $\mbf{z}$ is
\begin{equation}
\Pr(A_{m'}(\mbf{z})\cap A_{m''}(\mbf{z})|\mbf{z}) = \alpha^2
\end{equation}
when $\mbf{u}_{m''}\notin \spann(\mbf{u}_{m'})$, and
\begin{equation}
\Pr(A_{m'}(\mbf{z})\cap A_{m''}(\mbf{z})|\mbf{z})\leq \alpha
\end{equation}
when $\mbf{u}_{m''}\in \spann(\mbf{u}_{m'})$. \\
Using \eqref{e:cond_lat_to_lin}, the error union probability \eqref{e:pe_lat_bounds_4} can be restated as
\begin{align}
\label{e:pr_lat_error_event}
\Pr(e, \norm{\mbf{z}}\leq r^*)
&= \Pr\left( \bigcup_{m'\in[1,M-1]} A_{m'}(\mbf{z}) , \norm{\mbf{z}}\leq r^* \right) \nonumber \\
&= \int_{\norm{\mbf{z}}\leq r^*} f(\mbf{z}) \Pr\left( \left. \bigcup_{m'\in[1,M-1]} A_{m'}(\mbf{z}) \right|\mbf{z} \right) d\mbf{z}
\end{align}
where $f(\mbf{z})$ is the \pdf of $\mbf{z}$.
The internal probability term of the right-hand-side of \eqref{e:pr_lat_error_event} can be upper bounded by the union bound as
\begin{equation}
\Pr\left( \left. \bigcup_{m'\in[1,M-1]} A_{m'}(\mbf{z}) \right|\mbf{z} \right)
\leq (M-1)\alpha
\end{equation}
and lower bounded by the de-Caen inequality as
\begin{equation}
\Pr\left( \left. \bigcup_{m'\in[1,M-1]} A_{m'}(\mbf{z}) \right|\mbf{z} \right)
\geq \frac{(M-1)\alpha}{(M-1)\alpha + (q-1)}
\geq \frac{(M-1)\alpha}{q}
\end{equation}
where the first inequality follows from \eqref{e:lem_pr_lin_error_m_x_y_down_1}, and the second follows from the definition of $r^*$ \eqref{e:pe_lat_bounds_2} since $(M-1)\alpha\leq1$ for any $\mbf{z}:\norm{\mbf{z}}\leq r^*$.\footnotemark{}
\footnotetext{$\int_{\norm{\mbf{z}}\leq r^*} (M-1)\alpha\cdot f(\mbf{z})d\mbf{z}+\int_{\norm{\mbf{z}}>r^*} f(\mbf{z})d\mbf{z} = \int \min\{(M-1)\alpha,1\}\cdot f(\mbf{z})d\mbf{z}$}

Finally, plugging the above back into \eqref{e:pr_lat_error_event} completes the proof.
\end{proof}

% -----------------------------------------------------------
\subsection{Bounding $\overline{P}_{e,m=0}$}
\begin{thm}
\label{t:thm_lat_tight}
Consider Loeliger's ensemble where the linear code's cardinality $q$, the dimension $N$, and the rate $R$ obey the relationship
\begin{equation}
\label{e:pe_lat_bounds_10}
O(0.5R^{-1}\log(N))\leq \log(q)\leq o(N),
\end{equation}
and let $\overline{P}_{e,m=0}(N)$ be the ensemble's average error probability upon transmission of the zero'th lattice point over an AWGN channel. Then, $\overline{P}_{e,m=0}(N)$ is independent of $m$, and for any $\delta<\delta^*$
\begin{equation}
-\lim_{N\rightarrow\infty} \frac{\log(\overline{P}_{e,m=0}(N))}{N} = E_r(\delta)
\end{equation}
where $\delta$, $\delta^*$, and $E_r(\delta)$ are the NLD, the maximum achievable NLD, and Plotyrev's random-coding error exponent \cite{j:PoltyrevOnCoding}, respectively.
\end{thm}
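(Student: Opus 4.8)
The plan is to combine the sandwich already assembled in this section with Poltyrev's achievability result, the only genuinely new ingredient being control of the multiplicative factor of $q$ separating the two sides. Writing $P_U \triangleq (M-1)\int_{\norm{\mbf z}\leq r^*}\alpha\, f(\mbf z)\,d\mbf z$ and $P_O\triangleq\Pr(\norm{\mbf z}>r^*)$, the bounds \eqref{e:pe_lat_bounds_2}, \eqref{e:pe_lat_bounds_3} together with Lemma \ref{l:lem_pr_lat_error_union} give
$$\frac{P_U}{q}\leq\overline{P}_{e,m=0}\leq P_U+P_O,$$
so it suffices to show that both sides share the exponent $E_r(\delta)$. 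Independence of $m$ is immediate from the translation invariance of the ensemble (with $\mbf v=0$) and of the additive noise, as already noted before \eqref{e:pe_lat_bounds_1}, so fixing $m=0$ is without loss.

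For the upper bound I would invoke Poltyrev's and Loeliger's achievability directly. Under the left inequality of \eqref{e:pe_lat_bounds_10} the condition \eqref{e:cond_lat_rstar} holds at the optimizing radius $r^*$ (which concentrates at the scale $\sqrt{N}$), so the reduction \eqref{e:cond_lat_to_lin} is valid and $(M-1)\alpha$ converges to the continuous density count $\gamma V_N\norm{\mbf z}^N$, with $V_N$ the unit-ball volume; minimizing $P_U+P_O$ over $r^*$ then reproduces $e^{-N(E_r(\delta)+o(1))}$ by \eqref{eq_bound_mhs_exp}, giving $-\lim_N N^{-1}\log(P_U+P_O)=E_r(\delta)$.

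For the lower bound two facts are needed. First, $-N^{-1}\log(P_U/q)=-N^{-1}\log P_U+N^{-1}\log q$, and the right inequality of \eqref{e:pe_lat_bounds_10} forces $N^{-1}\log q\to 0$, so $P_U/q\doteq P_U$. Second, I claim $P_U\dotgeq P_O$ at the optimizing $r^*$, whence $P_U\doteq P_U+P_O$. To see this, note that $P_U(r)$ is nondecreasing and $P_O(r)$ nonincreasing in $r$, so their per-letter exponents are respectively nonincreasing and nondecreasing; the minimizer of the sum therefore sits either at the interior crossing of the two exponents, where $P_U\doteq P_O$, or against the upper feasibility boundary imposed by \eqref{e:cond_lat_rstar}, where $P_O$ is the faster-decaying term and again $P_U\dotgeq P_O$. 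Combining, $P_U/q\doteq P_U\doteq P_U+P_O\doteq e^{-NE_r(\delta)}$, which pins the lower bound to the same exponent and closes the squeeze.

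The crux, and the reason the lattice extension is harder than the achievability bound alone, is the order of limits. One cannot pass to Loeliger's asymptotic ($q\to\infty$) ensemble before sending $N\to\infty$, since for fixed $N$ the lower bound $P_U/q$ would vanish. The argument must instead let $q=q(N)$ grow jointly with $N$ inside the window \eqref{e:pe_lat_bounds_10}: large enough (left side) that the radius $r^*$ of order $\sqrt{N}$ still fits in a single code cube, so that the linear-ensemble distribution of Theorem \ref{t:thm_pr_lin_1_cond_k} and the density approximation driving $E_r(\delta)$ both remain valid, yet small enough (right side) that the $1/q$ deficit is exponentially invisible. I expect the main technical work to be verifying that the finite but growing-$q$ ensemble still attains Poltyrev's asymptotic exponent $E_r(\delta)$ along this joint path, and that the optimizing $r^*$ stays strictly interior to the constraint \eqref{e:cond_lat_rstar} throughout, so that neither the reduction nor the crossing-point comparison of $P_U$ against $P_O$ degrades in the limit.
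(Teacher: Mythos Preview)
Your proposal follows the paper's route almost verbatim: sandwich $\overline{P}_{e,m=0}$ between $P_U/q$ and $P_U+P_O$ via Lemma~\ref{l:lem_pr_lat_error_union}, use the right side of \eqref{e:pe_lat_bounds_10} to kill the factor $q$, and then argue that $P_O$ is exponentially dominated by $P_U$ at the optimizing radius. The one substantive deviation is your justification of $P_U\dotgeq P_O$. You claim the minimizer of $P_U(r)+P_O(r)$ sits at the crossing of the two per-letter exponents, but this is not what happens: the exact minimizer is determined by the derivative condition $(M-1)\alpha(r^*)=1$ (see the footnote after \eqref{e:pe_lat_bounds_2}), which for $\delta<\delta_{\textnormal{cr}}$ lands strictly beyond the peak of the $P_U$ integrand and hence away from any exponent crossing. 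Your boundary-case reasoning is also inverted---if the unconstrained optimum lay beyond the feasibility limit \eqref{e:cond_lat_rstar}, it would be $P_O$ that dominates there, not $P_U$. The paper avoids this altogether by splitting into cases: for $\delta<\delta_{\textnormal{cr}}$ it simply cites the Poltyrev/Ingber computation that at this particular $r^*$ the overshoot term $P_O$ carries the sphere-packing exponent, which strictly exceeds the straight-line exponent of $P_U$; for $\delta_{\textnormal{cr}}\leq\delta<\delta^*$ it invokes the known tightness via the sphere-packing lower bound. Your conclusion $P_U\dotgeq P_O$ is correct, but the crossing heuristic does not prove it---replace that step with the direct citation and the case split, and the argument goes through.
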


\begin{proof}
Similarly to the treatment of linear codes, note from \eqref{e:pe_lat_bounds_0} and \eqref{e:lem_pr_lat_error_union} that $\overline{P}_{e,m=0}$ can be bounded as follows:
\begin{equation}
\label{e:pe_lat_bounds_5}
\frac{P_1}{q} \leq \overline{P}_{e,m=0} \leq P_1+P_2
\end{equation}
where
\begin{align}
&P_1 = (M-1) \int_{\norm{\mbf{z}}\leq r^*} \alpha \cdot f(\mbf{z}) d\mbf{z} \nonumber \\
&P_2 = \int_{\norm{\mbf{z}}>r^*} f(\mbf{z})d\mbf{z}.
\end{align}
Recall that asymptotically in $q$ (for any $N>1$) \cite{j:IngberFiniteDimensional,j:LoeligerAveragingBounds}
\begin{align}
\label{e:pe_lat_bounds_6}
&P_1 \xrightarrow{\scriptscriptstyle q\to\infty} e^{N\delta} V_N \int_0^{r^*} f_{\norm{\mbf{z}}}(\rho) d\rho \nonumber \\
&r^* \xrightarrow{\scriptscriptstyle q\to\infty} e^{-\delta} V_N^{-1/N}
\end{align}
where $\delta$ is the lattice's NLD (i.e. $\gamma=e^{N\delta}$), $V_N=\frac{\pi^{N/2}}{\Gamma(N/2+1)}$ is the volume of an $N$-dimensional unit sphere, and $f_{\norm{\mbf{z}}}(\rho)$ is the \pdf of the noise magnitude $\norm{\mbf{z}}$.

In order to complete the proof, we need to show that the left and right-hand-sides of \eqref{e:pe_lat_bounds_5} can be exponentially equated in a similar fashion to the linear ensemble case,
\begin{equation}
\label{e:pe_lat_bounds_7}
\frac{P_1}{q} \doteq P_1+P_2
\end{equation}
where $\doteq$ implies asymptotic equality, simultaneously approached in $q$ and $N$, and where the dependence on $q$ and $N$ is suppressed.
Increasing $q$ is required for the ensemble to be sufficiently dense, such that \eqref{e:pe_lat_bounds_6} holds, while increasing $N$ is necessary for the exponent.
Due to the ensemble's construction, it is necessary to define the relationship between $q$, $K$, and $N$.
Let us begin by defining a relationship and end by showing that such a relationship suffices to achieve \eqref{e:pe_lat_bounds_7}.
Restrict the ratio between $K$ and $N$ to be approximately constant.
One way to achieve this is by selecting
\begin{equation}
\label{e:pe_lat_bounds_8}
K = \lceil RN \rceil
\end{equation}
where $R$ is constant, and $\lceil \cdot \rceil$ denotes the nearest-integer ceil operator.
An asymptotic lower bound on $q$ can be found by plugging \eqref{e:pe_lat_bounds_6} into \eqref{e:cond_lat_rstar}
\begin{equation}
\label{e:pe_lat_bounds_9}
\log(q)> \frac{N}{K}\log(4)-K^{-1}\log(V_N).
\end{equation}
Using $K$ as defined by \eqref{e:pe_lat_bounds_8}, the right-hand-side of \eqref{e:pe_lat_bounds_9} can be upper bounded as
\begin{align}
\frac{N}{K}\log(4)-K^{-1}\log(V_N)
&\leq R^{-1}(\log(4)-N^{-1}\log(V_N)) \nonumber \\
&< 0.5R^{-1}\log(N) - 0.5R^{-1}\log(2\pi) + R^{-1}\log(4) \nonumber \\
&= O(0.5R^{-1}\log(N))
\end{align}
where the second inequality follows from the definition of $V_N$.
Choosing $q$ such that
\begin{equation}
\label{e:pe_lat_bounds_10}
O(0.5R^{-1}\log(N))\leq \log(q)\leq o(N)
\end{equation}
obviously obeys \eqref{e:pe_lat_bounds_9}.
The upper bound in \eqref{e:pe_lat_bounds_10} is clarified shortly.\footnotemark{}
\footnotetext{Proving the existence of lattices that are simultaneously good for both coding and quantization, by similar constructions, leads to similar requirements on the order of the limits \cite{p:OrdentlichASimpleProof}.}
From \eqref{e:pe_lat_bounds_6} we can say that asymptotically in $q$
\begin{equation}
\label{e:pe_lat_bounds_11}
\frac{P_1}{q} \xrightarrow{\scriptscriptstyle q\to\infty} e^{N(\delta-N^{-1}\log(q))} V_N \int_0^{r^*} f_{\norm{\mbf{z}}}(\rho) d\rho.
\end{equation}
From \eqref{e:pe_lat_bounds_11} and the upper bound set in \eqref{e:pe_lat_bounds_10}
\begin{equation}
\frac{P_1}{q} \doteq P_1
\end{equation}
asymptotically in $N$.
Furthermore, it is known that for NLDs below $\delta_{\textnormal{cr}}$, $P_2$ goes to zero with a larger exponent in $N$ than $P_1$ \cite{j:PoltyrevOnCoding,j:IngberFiniteDimensional} so that
\begin{equation}
P_1 \doteq P_1+P_2.
\end{equation}

This shows that for NLDs $\delta: \delta<\delta_{\textnormal{cr}}$, Poltyrev's random-coding bound is exponentially tight for Loeliger's asymptotic ensemble.
This, together with the tightness of the Poltyrev's random-coding bound for NLDs $\delta: \delta\geq\delta_{\textnormal{cr}}$ \cite{j:PoltyrevOnCoding}, leads to the conclusion that Poltyrev's random-coding bound is exponentially tight for Loeliger's asymptotic ensemble at all NLDs.
Plugging the Guassian \pdf for $f(\mbf{z})$ and simplifying the expression $P_1+P_2$ (asymptotically in $q$ and $N$, see \cite{j:IngberFiniteDimensional}) completes the proof.
\end{proof}

% -----------------------------------------------------------
% -----------------------------------------------------------
\section{Concluding Remarks}
The tightness of the random-coding exponent over the random linear ensemble has long been conjectured.
The main contribution made by this paper is a rigorous proof showing that this conjecture is indeed true.
An extension of the proof to the random lattice ensemble is also shown.
A secondary, but perhaps significant, contribution is the explicit distribution of codewords conditioned on other codewords for the random linear ensemble, which may prove useful elsewhere.

\section*{Acknowledgement}
The authors would like to thank Alexander Barg for helpful discussions.
%% -----------------------------------------------------------
%% -----------------------------------------------------------
%\appendix

% -----------------------------------------------------------
\bibliographystyle{IEEEtran}
\bibliography{IEEEabrv,Yuval}
% -----------------------------------------------------------

\end{document}